%
\documentclass[12pt]{iopart}
\usepackage{iopams} 
\usepackage{mathbbol} 
\usepackage[colorlinks, linkcolor =blue]{hyperref} 
\usepackage{amsthm} 
\usepackage{mathrsfs} 
\usepackage{graphicx} 
\usepackage{array} 

\usepackage{epsf}
\usepackage{subfigure}
\usepackage{epstopdf}
\DeclareGraphicsRule{.tif}{png}{.png}{`convert #1 `basename #1 .tif`.png}

\newcommand{\bm}[1]{\boldsymbol{#1}} 
\newcommand{\binom}[2]{{#1 \choose #2}} 
\newtheorem{lemma}{Lemma} 
\newtheorem*{lemma*}{Lemma} 
\newtheorem{theorem}{Theorem} 
\newtheorem*{proposition*}{Proposition} 
\newtheorem*{corollary*}{Corollary} 

\newcolumntype{M}[1]{>{\centering\arraybackslash}m{#1}}
\newcolumntype{N}{@{}m{0pt}@{}}

\begin{document}


\title{Microreversibility, nonequilibrium current fluctuations, and response theory}

\author{M Barbier and P Gaspard}

\address{Center for Nonlinear Phenomena and Complex Systems, Universit\'e libre de Bruxelles (ULB), Code Postal 231, Campus Plaine, B-1050 Brussels, Belgium}

\ead{Maximilien.Barbier@ulb.ac.be,gaspard@ulb.ac.be}


\begin{abstract}
Microreversibility rules the fluctuations of the currents flowing across open systems in nonequilibrium (or equilibrium) steady states. As a consequence, the statistical cumulants of the currents and their response coefficients at arbitrary orders in the deviations from equilibrium obey time-reversal symmetry relations.  It is shown that these relations allow us to systematically reduce the amount of independent quantities that need to be measured experimentally or computed theoretically in order to fully characterize the linear and nonlinear transport properties of general open systems.  This reduction is shown to approach one half for quantities of arbitrarily high orders.
\end{abstract}

\vspace{2pc}
\noindent{\it Keywords}: time-reversal symmetry, fluctuation relations, nonequilibrium systems, full counting statistics, response theory, Euler polynomials

\section{Introduction}\label{intro_sec}

The motion of electrons and other particles composing matter is ruled by microscopic Hamiltonian dynamics that is symmetric under the time-reversal transformation.  This symmetry called microreversibility has fundamental consequences for the transport properties of matter from micro- to macro-scales.  At positive temperatures, the motion of particles is ceaseless because of thermal fluctuations.  Yet, there are no net currents of energy or particles flowing through matter at thermodynamic equilibrium where the principle of detailed balance holds \cite{deGr,Diu,Kle55}.  To generate such currents, the system of interest should be driven out of equilibrium by external forces or contacts with external reservoirs at different temperatures or chemical potentials, which has the effect of breaking detailed balance \cite{Kle55}.  Under such nonequilibrium conditions, the generated currents of energy or particles depend on the mechanical or thermodynamic forces, commonly called affinities \cite{deGr,DeDon,Prig,Cal}, which are driving the system away from equilibrium. Close to equilibrium, the dependence of the currents on the affinities is linear.  In this regime, the linear response coefficients satisfy the Green-Kubo formulae \cite{Gre52,Gre54,Kub57}, the fluctuation-dissipation theorem \cite{CW51, Kub66}, as well as the Onsager-Casimir reciprocity relations \cite{Ons31a,Ons31b,Cas45} as a consequence of microreversibility. 

However, many systems are driven farther away from equilibrium in regimes where the currents have nonlinear dependences on the affinities.  Such dependences arise due to heterogeneities met by the particles in their motion across the system, as is the case for instance in semiconducting electronic devices.  A fundamental issue is to understand the consequences of microreversibility in these nonlinear regimes.  To this respect, remarkable results have been obtained in the form of the so-called \textit{fluctuation theorems} or \textit{fluctuation relations} (FR) \cite{ECM93,ES94,GC95,Jar97,Kur98,LS99,Cro99,ZC03,AGM09,EHM09,CHT11,HPPG11,Gas13_1, Gas13_2}.  Their main interest is that they are valid in nonlinear as well as linear regimes.  In these latter regimes close to equilibrium, FR can be used to recover the Onsager-Casimir reciprocity relations and the Green-Kubo formulae \cite{Gal96,LS99}. Beyond, it has been shown that
general Onsager-Casimir-like identities between the nonlinear response coefficients and the statistical cumulants of the fluctuating currents can be directly deduced from the FR \cite{AGM09,AG04,AG07,SU08,WF15}. These identities allow us to reduce the number of coefficients we need to measure experimentally or compute theoretically in order to characterize transport in nonlinear regimes.

In this paper, our purpose is to determine quantitatively the amount of reduction provided by microreversibility in the study of linear and nonlinear transport properties.  We consider general nonequilibrium systems crossed by heat or particle currents \cite{JW04,AG06}, or undergoing chemical reactions, the rates of which are the currents from reactants to products \cite{AG04,Gas04}.  These systems can be described by classical or quantum mechanics, or the theory of stochastic processes \cite{AGM09,EHM09,Gas13_1}. They are supposed to be maintained in nonequilibrium (or equilibrium) steady states by contacts with arbitrarily large reservoirs of energy and particles at fixed values of temperatures and chemical potentials, in the absence of magnetic field. Our study is then entirely built on the FR expressed in terms of the generating function of the statistical cumulants of the currents. These latter provide the full characterization of the current statistics, which is also known as full counting statistics, in particular for currents of electrons \cite{Naz}.

We first deduce from the FR a set of relations between the cumulants and their responses to the nonequilibrium constraints. The main outcome of our work is then to show that the elements of this set are not all independent and to identify the relations that can be deduced from the others. This result is used to determine, as a consequence of microreversibility, the number of independent quantities that describe the transport properties. In particular, we show that the FR effectively divides by two the total number of quantities that need to be known in order to specify the full current statistics in a nonequilibrium steady state.

The paper is structured as follows. Section~\ref{fluct_rel_sec} stands as a brief introduction to the FR. The statistical cumulants and their responses to the nonequilibrium constraints are then defined in section~\ref{coef_sec}, where we also see how the FR generates relations between the latter quantities. Section~\ref{indep_rel_sec} contains the core of our study. There we show that some of the aforementioned relations between the cumulants and their responses are trivial, in the sense that they can be deduced from a subset of these relations. It is also worth pointing out that we deduce, as a direct by-product of our mathematical analysis, a general identity satisfied by coefficients of Euler polynomials. We then illustrate our results in section~\ref{number_ind_coef_sec}, where we quantify the total number of independent quantities that need to be known to fully characterize the nonlinear response of the system to its nonequilibrium constraints. Concluding remarks are finally drawn in section~\ref{conclusion_sec}.


\section{Fluctuation relation}\label{fluct_rel_sec}

Following the discussion of \cite{AGM09}, we consider an open system coupled to $r\geqslant 2$ arbitrarily large reservoirs of energy and particles at the temperatures $T_j$ and chemical potentials $\mu_j$ ($j = 1,2,\ldots, r$).  We assume that the coupling does not change the thermodynamic properties of the reservoirs, so that $T_j$ and $\mu_j$ are all time independent.  Upon differences in the temperatures and chemical potentials, the system is driven out of equilibrium and subjected to currents of energy and particles.  After some relaxation time, the system reaches a nonequilibrium steady state characterized by the thermal and chemical affinities
\begin{eqnarray}
A_{j \mathrm{E}} \equiv \frac{1}{k_{\mathrm{B}} T_r} - \frac{1}{k_{\mathrm{B}} T_j} \qquad\mbox{and}\qquad A_{j \mathrm{N}} \equiv \frac{\mu_j}{k_{\mathrm{B}} \, T_j} - \frac{\mu_r}{k_{\mathrm{B}} \, T_r} \, ,
\label{therm_chem_aff_def}
\end{eqnarray}
which are defined with respect to some reference reservoir, here taken as $j=r$.  In~\eref{therm_chem_aff_def}, $k_{\mathrm{B}}$~denotes Boltzmann's constant.  At equilibrium, these affinities are vanishing together with the mean currents. In the sequel, we collectively denote the affinities~\eref{therm_chem_aff_def} by the vector $\bm{A}$.

The microscopic dynamics of the process is ruled during the time interval $[0,t]$ by the total Hamiltonian $\hat H$, which is assumed to satisfy the symmetry
\begin{eqnarray}
\hat\Theta \hat H \hat\Theta^{-1} = \hat H
\label{microrev_Ham}
\end{eqnarray}
under the anti-unitary time-reversal operator $\hat\Theta$. The symmetry~\eref{microrev_Ham} is the expression of microreversibility. Here, we suppose that there is no external magnetic field.

In order to characterize the fluctuations of the currents, a full current statistics is carried out for the exchanges of energy and particles between the reservoirs. The amounts of energy and particles exchanged between the $j^{\mathrm{th}}$ reservoir and the reference one are denoted by $\Delta\bm{x}=\{\Delta x_k\}_{k=1}^{\chi}$, where $\Delta x_k$ is either an energy exchange $\Delta\epsilon_j$ or a particle exchange $\Delta \nu_j$ and $\chi$ is the total number of currents.  The statistics of these exchanges during the time interval $[0,t]$ can be described by the probability density $p_t\left( \Delta\bm{x} \right)$ or, equivalently, by the generating function (GF) of its statistical moments \cite{vanK} defined by the Laplace-type integral
\begin{eqnarray}
G_t \left( \bm{\lambda} \right) \equiv \int p_t ( \Delta\bm{x}) \exp( -\bm{\lambda}\cdot\Delta\bm{x}) \, d\Delta\bm{x}
\label{mom_GF_def}
\end{eqnarray}
 where $\bm{\lambda}=\{\lambda_k\}_{k=1}^{\chi}$ are commonly referred to as counting parameters (or counting fields).  There are as many counting parameters as nonvanishing affinities.  For instance, we would have $\chi=r-1$ counting parameters for one species of particles, e.g. electrons, flowing between $r$ reservoirs at different chemical potentials in an isothermal system.

The two-point quantum measurement scheme \cite{AGM09,EHM09,Kur00} allows us to express the GF \eref{mom_GF_def} according to
\begin{eqnarray}
G_t \left( \bm{\lambda} \right) = {\rm tr} \, \hat\rho_0\, {\rm e}^{i\hat H t} \, {\rm e}^{-\bm{\lambda}\cdot\hat{\bm X}} \, {\rm e}^{-i\hat H t} \, {\rm e}^{\bm{\lambda}\cdot\hat{\bm X}} 
\label{mom_GF_quant}
\end{eqnarray}
where $\hat{\bm X}=\{\hat X_k\}$ collectively denotes the Hamiltonian operators $\hat H_j$ and particle number operators $\hat N_j$ of the reservoirs themselves. The initial density operator $\hat\rho_0$ describes the grand-canonical ensemble for the reservoirs at the given temperatures and chemical potentials:
\begin{eqnarray}
\hat\rho_0 = \prod_{j=1}^{r} \frac{1}{\Xi_j} \, {\rm e}^{-\left(\hat H_j-\mu_j\hat N_j\right)/(k_{\rm B}T_j)}
\label{rho0}
\end{eqnarray}
where $\Xi_j$ denotes the corresponding partition function.

In the long-time limit $t \to \infty$, we introduce the cumulant GF according to
\begin{eqnarray}
Q \left( \bm{\lambda} , \bm{A} \right) \equiv - \lim_{t \to \infty} \frac{1}{t} \, \ln G_t \left( \bm{\lambda} \right)
\label{cumulant_GF_def}
\end{eqnarray}
where the dependence on the affinities $\bm{A}$ is explicitly denoted and arises from the parameters in the density operator~(\ref{rho0}). The cumulant GF characterizes the full current statistics in the equilibrium or nonequilibrium steady state that is reached in the long-time limit $t \to \infty$.  For ballistic transport in mesoscopic circuits, this theory is related to the scattering theory of transport \cite{Gas13_1,Naz,L57,L70,B86a,B86b,LL93,LS11}.

It can then be shown \cite{AGM09,Gas13_1,Gas13_2} that, as a consequence of the time-reversal symmetry~\eref{microrev_Ham} the cumulant GF~\eref{cumulant_GF_def} satisfies the symmetry
\begin{eqnarray}
Q \left( \bm{\lambda} , \bm{A} \right) = Q \left( \bm{A} - \bm{\lambda} , \bm{A} \right) .
\label{fluct_rel}
\end{eqnarray}
This relation is known as a (multivariate) \textit{fluctuation relation} (FR). As is well known, a FR can be shown \cite{LS99,AGM09,Gal96,AG04,AG07,SU08,WF15} to generate the Green-Kubo formulae \cite{Gre52,Gre54,Kub57} and the Onsager-Casimir reciprocity relations \cite{Ons31a,Ons31b,Cas45}. Such relations are valid in the linear response regime, i.e. for a nonequilibrium system close to equilibrium. The fundamental interest of the FR~\eref{fluct_rel} is that it remains valid arbitrarily far from equilibrium. Therefore, a FR also gives access to the properties of the nonlinear response regime, as we now discuss.


\section{Statistical cumulants and response coefficients}\label{coef_sec}

Here we define the statistical cumulants and response coefficients, that are related to the GF~\eref{cumulant_GF_def} through derivatives with respect to its independent variables $\bm{\lambda}$ and $\bm{A}$.  By construction, the statistical cumulants are defined from the Taylor expansion of~\eref{cumulant_GF_def} in powers of the counting parameters $\bm{\lambda}$. We denote by $Q_{\alpha_1 \cdots \alpha_m} \left( \bm{A} \right)$ the $m^{\mathrm{th}}$~cumulant, and we have
\begin{eqnarray}
Q_{\alpha_1 \cdots \alpha_m} \left( \bm{A} \right) \equiv \left. \frac{\partial^m Q}{\partial \lambda_{\alpha_1} \cdots \partial \lambda_{\alpha_m}} \left( \bm{\lambda}, \bm{A} \right) \right|_{\bm{\lambda} = \bm{0}} \equiv \frac{\partial^m Q}{\partial \lambda_{\alpha_1} \cdots \partial \lambda_{\alpha_m}} \left( \bm{0}, \bm{A} \right)
\label{m_cumulant_def}
\end{eqnarray}
where $\alpha_j$, $j = 1, \ldots , m$, can take any value between 1 and $\chi$. Note that the cumulants can also be defined for $m=0$ because of the normalization condition
\begin{eqnarray}
Q \left( \bm{0} , \bm{A} \right) = 0 \, ,
\label{cumulant_GF_norm_cond}
\end{eqnarray}
which can be readily obtained from~\eref{mom_GF_def} and~\eref{cumulant_GF_def}. It must thus be understood that the $0^{\mathrm{th}}$ cumulant merely vanishes. Therefore, we can write the cumulant GF as
\begin{eqnarray}
Q \left( \bm{\lambda} , \bm{A} \right) = \sum_{m = 0}^{\infty} \frac{1}{m!} \, Q_{\alpha_1 \cdots \alpha_m} \left( \bm{A} \right) \lambda_{\alpha_1} \cdots \lambda_{\alpha_m}
\label{Q_exp_count_par}
\end{eqnarray}
where we use, here and in the sequel, Einstein's summation convention for repeated indices.

Now, the cumulants~\eref{m_cumulant_def} are by construction functions of the affinities. We can thus expand them as power series of $\bm{A}$, i.e.
\begin{eqnarray}
Q_{\alpha_1 \cdots \alpha_m} \left( \bm{A} \right) = \sum_{n = 0}^{\infty} \frac{1}{n!} \, Q_{\alpha_1 \cdots \alpha_m \, , \, \beta_1 \cdots \beta_n} A_{\beta_1} \cdots A_{\beta_n}
\label{cumulant_exp_aff}
\end{eqnarray}
for any $m \geqslant 0$, with
\begin{eqnarray}
\fl Q_{\alpha_1 \cdots \alpha_m \, , \, \beta_1 \cdots \beta_n} \equiv \frac{\partial^n Q_{\alpha_1 \cdots \alpha_m}}{\partial A_{\beta_1} \cdots \partial A_{\beta_n}} \left( \bm{0} \right) = \frac{\partial^{m+n} Q}{\partial \lambda_{\alpha_1} \cdots \partial \lambda_{\alpha_m} \partial A_{\beta_1} \cdots \partial A_{\beta_n}} \left( \bm{0}, \bm{0} \right) .
\label{m_cumulant_n_resp_def}
\end{eqnarray}
Here it must be understood that (i) for $m=0$, we have $Q_{\, , \, \beta_1 \cdots \beta_n} = 0$ in view of the normalization condition~\eref{cumulant_GF_norm_cond}, while (ii) for $n=0$, we merely have $Q_{\alpha_1 \cdots \alpha_m \, , \,} = Q_{\alpha_1 \cdots \alpha_m} \left( \bm{0} \right)$. Therefore, substituting the expression~\eref{cumulant_exp_aff} of the cumulant $Q_{\alpha_1 \cdots \alpha_m}$ into~\eref{Q_exp_count_par}, we obtain the following expansion of the cumulant GF as a power series of both the counting parameters $\bm{\lambda}$ and the affinities $\bm{A}$:
\begin{eqnarray}
Q \left( \bm{\lambda} , \bm{A} \right) = \sum_{m , n = 0}^{\infty} \frac{1}{m! n!} \, Q_{\alpha_1 \cdots \alpha_m \, , \, \beta_1 \cdots \beta_n} \lambda_{\alpha_1} \cdots \lambda_{\alpha_m} A_{\beta_1} \cdots A_{\beta_n}
\label{Q_exp_count_par_and_aff}
\end{eqnarray}
where we emphasize that, using Einstein's convention, the indices $\alpha_j$ and $\beta_j$ are all summed from 1 to $\chi$ ($\chi$ being the total number of independent currents).

It should be noted that the quantity $Q_{\alpha_1 \cdots \alpha_m \, , \, \beta_1 \cdots \beta_n}$ is invariant under any permutation of the subscripts that appear on one definite side of the comma, i.e.
\begin{eqnarray}
Q_{\alpha_1 \cdots \alpha_m \, , \, \beta_1 \cdots \beta_n} = Q_{\alpha_{P_m (1)} \cdots \alpha_{P_m (m)} \, , \, \beta_{P_n (1)} \cdots \beta_{P_n (n)}}
\label{Q_inv_perm_indices}
\end{eqnarray}
for any permutations $P_m$ and $P_n$ of $m$ and $n$ elements, respectively. This invariance is rooted in the definition~\eref{m_cumulant_n_resp_def} of $Q_{\alpha_1 \cdots \alpha_m \, , \, \beta_1 \cdots \beta_n}$. Indeed, note that each of the $m$ subscripts $\alpha_i$ on the left of the comma denotes a (partial) derivative of the cumulant GF with respect to a counting parameter $\lambda_i$. Such derivatives can be performed in an arbitrary order. The same property holds for the $n$ subscripts $\beta_i$ on the right of the comma, as each of them denotes a (partial) derivative of the GF with respect to an affinity $A_i$.

As is clear on~\eref{therm_chem_aff_def}, to have $\bm{A} = \bm{0}$ describes the equilibrium state of the $r$ reservoirs, in which case there exist no net currents. The quantity $Q_{\alpha_1 \cdots \alpha_m \, , \, \beta_1 \cdots \beta_n}$ characterizes the response of the $m^{\mathrm{th}}$ cumulant $Q_{\alpha_1 \cdots \alpha_m}$ with respect to the affinities $A_{\beta_1} , \ldots , A_{\beta_n}$  about the equilibrium value. In particular, the response $Q_{\alpha \, , \, \beta_1 \cdots \beta_n}$ of the mean current $Q_{\alpha}$ defines a so-called $n^{\rm th}$-order \textit{response coefficient}. The response coefficients characterize the response of the mean current  $Q_{\alpha}$ with respect to some nonequilibrium constraint imposed by the affinities $\bm{A}$. The first-order response coefficients $Q_{\alpha \, , \, \beta}$ characterize the \textit{linear response} properties of the system, and are precisely the ones that satisfy the Green-Kubo formulae \cite{Gre52,Gre54,Kub57} and the Onsager-Casimir reciprocity relations \cite{Ons31a,Ons31b,Cas45}. The higher-order response coefficients $Q_{\alpha \, , \, \beta_1 \cdots \beta_n}$, with $n>1$, characterize the \textit{nonlinear response} properties of the system. As is well known \cite{AGM09,HPPG11,AG04,AG07,SU08,WF15}, such nonlinear response coefficients turn out to satisfy Onsager-Casimir-like relations as a direct consequence of the FR~\eref{fluct_rel}.

A first strategy to derive general relations satisfied by the response coefficients is to successively differentiate the FR~\eref{fluct_rel} with respect to both the counting parameters $\bm{\lambda}$ and the affinities $\bm{A}$ \cite{AGM09,AG04}. A more global approach is to infer from~\eref{fluct_rel} a compact relation satisfied by the coefficients $Q_{\alpha_1 \cdots \alpha_m \, , \, \beta_1 \cdots \beta_n}$ \cite{AG07,SU08,WF15}. This can be for instance done by (i) expanding both sides of the FR~\eref{fluct_rel} as power series of $\bm{\lambda}$ and $\bm{A}$, and (ii) identifying the coefficients of a same power of both the counting parameters and the affinities.

To this end, it proves useful to write the FR~(\ref{fluct_rel}) by making the substitution $\bm{\lambda}\to-\bm{\lambda}$ as
\begin{eqnarray}
Q \left(-\bm{\lambda} , \bm{A} \right) = Q \left(\bm{\lambda}+ \bm{A} , \bm{A} \right) =\hat T(\bm{A}) Q \left(\bm{\lambda} , \bm{A} \right) 
\label{fluct_rel_2}
\end{eqnarray}
in terms of the translation operator
\begin{eqnarray}
\hat T \left( \bm{A} \right) \equiv {\rm e}^{{\bm{A} \cdot \frac{\partial}{\partial \bm{\lambda}}}} \, .
\label{transl}
\end{eqnarray}
Therefore, expanding both sides of~\eref{fluct_rel_2} as power series of $\bm{\lambda}$ and $\bm{A}$, we can show that the FR yields the general relation
\begin{eqnarray}
Q_{\alpha_1 \cdots \alpha_m \, , \, \beta_1 \cdots \beta_n} = (-1)^m \sum_{j=0}^{n} Q_{\alpha_1 \cdots \alpha_m \, , \, \beta_1 \cdots \beta_n}^{(j)}  \qquad\qquad \forall \, m,n \geqslant 0 
\label{gen_rel_response_coef}
\end{eqnarray}
between the quantities $Q_{\alpha_1 \cdots \alpha_m \, , \, \beta_1 \cdots \beta_n}$, with $Q^{(0)} \equiv Q$ and
\begin{eqnarray}
Q_{\alpha_1 \cdots \alpha_m \, , \, \beta_1 \cdots \beta_n}^{(j)} \equiv \sum_{k_1 = 1}^{n} \sum_{k_{2}=1 \atop k_{2} > k_{1}}^{n} \cdots \sum_{k_{j}=1 \atop k_{j} > k_{j-1}}^{n} Q_{\alpha_1 \cdots \alpha_m \beta_{k_1} \cdots \beta_{k_j} \, , \, (\bm{\cdot})}
\label{Q_j_expr}
\end{eqnarray}
for $j \geqslant 1$, where $(\bm{\cdot})$ denotes the set of all subscripts $\beta_k$ that are different of the subscripts $\beta_i$ present on the left of the comma (i.e. $\beta_{k_1} , \ldots , \beta_{k_j}$ here). The details of this calculation may be found in~\ref{proof_rel_app} where we present a derivation alternative to \cite{AG07,AndPhD}.

The interest of the result~\eref{gen_rel_response_coef} is that it gives a direct access to relations satisfied by the response coefficients of arbitrary order. This can be seen by investigating all possible relations obtained from~\eref{gen_rel_response_coef} for a fixed total number $\mathcal{N} = m+n$ of subscripts. To consider such a fixed $\mathcal{N}$ is justified by the structure of~\eref{gen_rel_response_coef}, for the latter only involves quantities $Q_{\alpha_1 \cdots \alpha_m \, , \, \beta_1 \cdots \beta_n}$ that all possess the same total number of subscripts. We note that the Green-Kubo and Onsager reciprocity relations are given by~\eref{gen_rel_response_coef} for $\mathcal{N} = m+n=1$, while \cite{AG04,AG07,AG06} explicitly give the relations up to $\mathcal{N} = m+n\leqslant 3$.  Here, we propose to determine the relations that are independent among those obtained from~\eref{gen_rel_response_coef} for an arbitrary fixed number $\mathcal{N} = m+n$ and all possible values of~$m$ and~$n$. By proving in section~\ref{indep_rel_sec} the statement of \cite{AG07} that any relation obtained from~\eref{gen_rel_response_coef} for an even index $m$ can be deduced from the relations~\eref{gen_rel_response_coef} for odd indices $m$, we complete the analysis initiated in \cite{AG07}. We hence show that the independent relations are the ones obtained from~\eref{gen_rel_response_coef} for any odd value of the index $m$. We then use this result in section~\ref{number_ind_coef_sec} to count the number of independent quantities $Q_{\alpha_1 \cdots \alpha_m \, , \, \beta_1 \cdots \beta_n}$.


\section{Independent relations}\label{indep_rel_sec}

An essential point is that the relations~\eref{gen_rel_response_coef} separate into two series of relations depending on the parity of the integer $m$,
\begin{eqnarray}
m~\mbox{even}:\qquad  &&0 = \sum_{j=1}^{n} Q_{\alpha_1 \cdots \alpha_m \, , \, \beta_1 \cdots \beta_n}^{(j)} \label{rel_m_even}\\
m~\mbox{odd}:\qquad  &&Q_{\alpha_1 \cdots \alpha_m \, , \, \beta_1 \cdots \beta_n} = -\frac{1}{2} \sum_{j=1}^{n} Q_{\alpha_1 \cdots \alpha_m \, , \, \beta_1 \cdots \beta_n}^{(j)} \label{rel_m_odd}
\end{eqnarray}
and that these relations are not all independent of each other.  In this section, our aim is to prove the following:

\begin{theorem}
The relations~\eref{rel_m_even} for any even index $m$ can be deduced from the relations~\eref{rel_m_odd} corresponding to odd indices $m$.
\label{theorem1}
\end{theorem}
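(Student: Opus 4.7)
The plan is to reformulate the FR so that its content reduces to a single antisymmetry, after which the even-$m$ relations hold for free while the odd-$m$ relations become a triangular system fully determining the non-trivial content.

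Define
\[
F(\bm{\lambda}, \bm{A}) \equiv Q(-\bm{\lambda}, \bm{A}) - Q(\bm{\lambda}+\bm{A}, \bm{A}),
\]
so that the FR~\eref{fluct_rel_2} is $F \equiv 0$ and the relations~\eref{gen_rel_response_coef} are the vanishing of its Taylor coefficients $F_{\alpha_1\cdots\alpha_m, \beta_1\cdots\beta_n}$. Direct substitution in this definition gives $F(-\bm{\lambda}-\bm{A}, \bm{A}) = -F(\bm{\lambda}, \bm{A})$, an algebraic identity in the response coefficients $Q_{\alpha,\beta}$ that is valid independently of the FR. Passing to the shifted variable $\bm{\mu} \equiv \bm{\lambda} + \bm{A}/2$ and setting $\tilde{F}(\bm{\mu}, \bm{A}) \equiv F(\bm{\mu} - \bm{A}/2, \bm{A})$ turns this identity into $\tilde{F}(-\bm{\mu}, \bm{A}) = -\tilde{F}(\bm{\mu}, \bm{A})$; hence $\tilde{F}$ is odd in $\bm{\mu}$, and in particular the coefficients $\tilde{F}_{\alpha_1\cdots\alpha_m, \beta_1\cdots\beta_n}$ with $m$ even vanish identically as linear combinations of the $Q_{\alpha,\beta}$.

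An application of the chain rule to $F(\bm{\lambda}, \bm{A}) = \tilde{F}(\bm{\lambda} + \bm{A}/2, \bm{A})$, together with the permutation symmetry~\eref{Q_inv_perm_indices}, yields the bookkeeping identity
\[
F_{\alpha_1\cdots\alpha_m, \beta_1\cdots\beta_n} \;=\; \sum_{S \subset \{1,\ldots,n\}} \frac{1}{2^{|S|}} \, \tilde{F}_{\alpha_1\cdots\alpha_m \beta_S, \, \beta_{S^c}}
\]
summed over all subsets $S$ (including the empty one). For $m$ odd, every term with $|S|$ odd has $m+|S|$ even and so vanishes identically by the previous paragraph. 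Hence the odd-$m$ relation~\eref{rel_m_odd} becomes
\[
\tilde{F}_{\alpha_1\cdots\alpha_m, \beta_1\cdots\beta_n} \;=\; - \!\! \sum_{S \subset \{1,\ldots,n\} \atop |S|\geqslant 2,\; |S|\ \mathrm{even}} \!\! \frac{1}{2^{|S|}}\, \tilde{F}_{\alpha_1\cdots\alpha_m\beta_S,\, \beta_{S^c}},
\]
a recursion expressing a $\tilde{F}$-coefficient in terms of $\tilde{F}$-coefficients sharing the same (odd) left-of-comma length but with strictly smaller right-of-comma length $n-|S| < n$.

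An induction on $n$ then forces $\tilde{F}_{\alpha, \beta} = 0$ for every odd-length $\alpha$. The base case $n=0$ is immediate since the bookkeeping identity reduces to $F_{\alpha,} = \tilde{F}_{\alpha,}$, which the odd-$m$ relation at $(m, 0)$ sets to zero. For the induction step, every $\tilde{F}$ on the right-hand side of the recursion vanishes by the inductive hypothesis, forcing the left-hand side to vanish too. Together with the identity-level vanishing of $\tilde{F}$-coefficients of even left-of-comma length, this gives $\tilde{F} \equiv 0$, equivalently $F \equiv 0$, so the even-$m$ relations~\eref{rel_m_even} are satisfied. The main obstacle is to establish the bookkeeping identity cleanly in the presence of Einstein summation and of the permutation symmetry~\eref{Q_inv_perm_indices}; once this is in place the triangular induction is routine.
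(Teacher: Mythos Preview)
Your argument is correct and substantially cleaner than the paper's. The paper proceeds by a direct attack on the combinatorics: it fixes an even $m=2K$, forms the partial sums $S_k=\sum_{j=1}^{k}Q^{(j)}_{\bm\alpha,\bm\beta}$, and proves by a rather involved induction (Proposition $\mathrm{H}_l$) that $S_{2l}=\sum_{j\geqslant 2l+1}\gamma_j^{(2l)}Q^{(j)}_{\bm\alpha,\bm\beta}$ for certain coefficients $\gamma_j^{(2l)}$ obeying a two-step recurrence. Closing this induction requires identifying $\gamma_{2l}^{(2l)}=-1$, which the paper establishes via three lemmas relating the $\gamma$'s to coefficients of Euler polynomials; only then does $S_{n_K}=0$ fall out. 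Your route bypasses all of this machinery by recognising the algebraic identity $F(-\bm\lambda-\bm A,\bm A)=-F(\bm\lambda,\bm A)$, which after centring $\bm\mu=\bm\lambda+\bm A/2$ makes the even-$\bm\mu$ Taylor coefficients of $\tilde F$ vanish tautologically, and then reduces the odd-$m$ relations to a transparently triangular system in the odd-$\bm\mu$ coefficients. What the paper's approach buys is the curious by-product~\eref{gamma_2l_2l_from_Euler_coef} on Euler polynomial coefficients; what yours buys is a one-page proof with no special-function identities. One wording slip: in your recursion the $\tilde F$-coefficients on the right-hand side do \emph{not} share the same left-of-comma length (it is $m+|S|$, not $m$), only the same odd parity; since your induction is on $n$ alone this does not matter, but the sentence should be corrected.
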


\noindent We then obtain as an immediate consequence of this theorem that

\begin{corollary*}
Among all the relations~\eref{gen_rel_response_coef}, the set of independent ones is given by~\eref{rel_m_odd} corresponding to odd indices $m$.
\label{corollary}
\end{corollary*}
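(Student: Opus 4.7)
The Corollary is immediate from Theorem~\ref{theorem1}, so my plan is to prove the theorem itself. I will reformulate both~\eref{rel_m_even} and~\eref{rel_m_odd} via the single residual function
\[
\Phi(\bm{\lambda}, \bm{A}) := Q(\bm{\lambda}, \bm{A}) - Q(\bm{A} - \bm{\lambda}, \bm{A}),
\]
whose Taylor coefficient in $\lambda_{\alpha_1}\cdots\lambda_{\alpha_m} A_{\beta_1}\cdots A_{\beta_n}$ vanishes precisely when the corresponding instance of~\eref{rel_m_even} holds (for $m$ even) or of~\eref{rel_m_odd} holds (for $m$ odd). Hence the fluctuation relation~\eref{fluct_rel}, and with it the entire set~\eref{gen_rel_response_coef}, is equivalent to $\Phi \equiv 0$. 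Two symmetries of $\Phi$ drive the argument. The first is \emph{tautological}: $\Phi(\bm{A}-\bm{\lambda},\bm{A}) = -\Phi(\bm{\lambda},\bm{A})$ holds by construction, without any input from the FR. The second encodes the hypothesis: the assumption that every odd-$m$ identity~\eref{rel_m_odd} holds is equivalent to the statement that $\Phi$ is \emph{even in}~$\bm{\lambda}$, since those identities are precisely the vanishing of the odd-order $\bm{\lambda}$-derivatives of $\Phi$ at $\bm{\lambda}=\bm{0}$.

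Combining the two symmetries---substitute $\bm{\lambda}\to-\bm{\lambda}$ in the antisymmetry and then apply the evenness---produces the half-period identity
\[
\Phi(\bm{\lambda}+\bm{A},\bm{A}) = -\Phi(\bm{\lambda},\bm{A}).
\]
The key step is to deduce from this identity and the $\bm{\lambda}$-evenness that $\Phi \equiv 0$ as a formal power series in $\bm{\lambda}$ and $\bm{A}$. I would expand both sides in $\bm{\lambda}$ and $\bm{A}$ and match coefficients: for each odd $k$, the coefficient of $\lambda_{\mu_1}\cdots\lambda_{\mu_k}$ yields a formal identity in $\bm{A}$ whose degree-$N$ part reads schematically $\sum_{p\geq 1,\,p\text{ odd},\,p\leq N} \frac{1}{p!\,(N-p)!}\,\phi_{k+p,\,N-p}=0$, where $\phi_{m,n}$ denotes the coefficient of $\Phi$ at multi-order $\bm{\lambda}^m\bm{A}^n$. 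An induction on $n$ then forces $\phi_{m,n} = 0$ for every even $m \geq 2$ and every $n \geq 0$: the $p=1$ term isolates the target $\phi_{k+1,n}$, while every $p \geq 3$ contribution has strictly smaller $\bm{A}$-degree and vanishes by the inductive hypothesis. The $k=0$ equation then handles the residual $\phi_{0,n}$.

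The main obstacle I anticipate is organising the multi-index combinatorics of this induction cleanly; the binomial identities that enforce consistency across the different $(k,N)$ equations are precisely the general identity on coefficients of Euler polynomials that the paper advertises as a by-product. Once $\Phi \equiv 0$, the whole set~\eref{gen_rel_response_coef} holds, in particular the even-$m$ identities~\eref{rel_m_even}, establishing Theorem~\ref{theorem1}. The Corollary then follows at once: by Theorem~\ref{theorem1} the odd-$m$ relations~\eref{rel_m_odd} already imply all the remaining instances of~\eref{rel_m_even}, so the full set~\eref{gen_rel_response_coef} is generated by~\eref{rel_m_odd} alone, which therefore constitutes the independent set.
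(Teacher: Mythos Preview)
Your approach is correct and genuinely different from the paper's. The paper works entirely at the level of the coefficients: it fixes an even $m=m_K$, forms the partial sums $S_k=\sum_{j=1}^{k}Q^{(j)}_{\bm\alpha,\bm\beta}$, and by repeated substitution of the odd-$m$ relations~\eref{rel_m_odd} proves inductively that $S_{2l}=\sum_{j\ge 2l+1}\gamma_j^{(2l)}Q^{(j)}_{\bm\alpha,\bm\beta}$ with coefficients $\gamma_j^{(2l)}$ obeying a two-term recurrence. The heart of their argument is the identification $\gamma_{2l}^{(2l)}=-1$, established via three lemmas that relate the $\gamma$'s to constant terms of Euler polynomials; the advertised Euler-polynomial identity is a \emph{consequence} of that particular parametrisation, not an input. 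Your generating-function argument bypasses all of this: once you have the tautological antisymmetry $\Phi(\bm A-\bm\lambda,\bm A)=-\Phi(\bm\lambda,\bm A)$ and the evenness hypothesis, the half-period identity $(\hat T(\bm A)+1)\Phi=0$ follows, and since $\hat T(\bm A)-1$ raises the $\bm A$-degree by at least one, the operator $\hat T(\bm A)+1$ is invertible on formal power series, forcing $\Phi\equiv 0$. Your induction on $n$ is exactly the order-by-order inversion of that operator; it is entirely elementary and needs no Euler-polynomial input, so the ``main obstacle'' you anticipate does not in fact materialise.

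Two small points. First, your odd-$k$ equations alone already kill every $\phi_{m,n}$ with even $m\ge 2$ by the induction on $n$ you sketch; the $k=0$ equation is only needed afterwards for $\phi_{0,n}$, and there the terms with $m\ge 2$ are already known to vanish, so $2\phi_{0,n}=0$ immediately. Second, the Corollary asserts not just that the odd-$m$ relations~\eref{rel_m_odd} \emph{generate} the full set~\eref{gen_rel_response_coef}, but that they are themselves independent. You state this ``follows at once'' but give no reason; the paper supplies a short triangularity argument (the relation~\eref{rel_m_odd} at a given odd $m$ is the only one in which $Q_{\alpha_1\cdots\alpha_m,\beta_1\cdots\beta_{\mathcal N-m}}$ appears), and you should include it.
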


We consider all the relations that are obtained from~\eref{rel_m_even} and~\eref{rel_m_odd} for a given arbitrary total number $\mathcal{N} = m+n \geqslant 1$ of subscripts. We begin with an arbitrary, but fixed even index $m$ taking
\begin{eqnarray}
m=m_K \equiv 2 K \qquad\mbox{and}\qquad n=n_K = \mathcal{N} - m_K = \mathcal{N} - 2 K 
\label{m+n_K_def}
\end{eqnarray}
where $K$ is an arbitrary integer such that $0 \leqslant K \leqslant \mathbb{E} \left( \mathcal{N} / 2 \right)$. Here and in the sequel $\mathbb{E} (x)$ denotes the integer part of the positive real number $x$, i.e. the natural number $k \geqslant 0$ such that $k \leqslant x < k+1$. 

With the notations
\begin{eqnarray}
\bm{\alpha} \equiv \alpha_1 \cdots \alpha_{m_K} \qquad\mbox{and}\qquad \bm{\beta} \equiv \beta_1 \cdots \beta_{n_K}
\label{alpha_def}
\end{eqnarray}
the relation~\eref{rel_m_even} for $m=m_K$ and $n=n_K$ reads
\begin{eqnarray}
0 = \sum_{j=1}^{n_K} Q_{\bm{\alpha} \, , \, \bm{\beta}}^{(j)}
\label{rel_m_K}
\end{eqnarray}
where the quantities $Q^{(j)}$ are defined by~\eref{Q_j_expr}, i.e. here
\begin{eqnarray}
Q_{\bm{\alpha} \, , \, \bm{\beta}}^{(j)} = \sum_{k_1} \sum_{k_{2} > k_{1}} \cdots \sum_{k_{j} > k_{j-1}} Q_{\bm{\alpha} \beta_{k_1} \cdots \beta_{k_j} \, , \, (\boldsymbol{\cdot})}
\label{Q_j_expr_remind}
\end{eqnarray}
for any $1 \leqslant j \leqslant n_K$. We have also introduced the notation
\begin{eqnarray}
\sum_{k} \equiv \sum_{k=1}^{n_K} \, ,
\label{notation_sums}
\end{eqnarray}
which will prove useful in the sequel to lighten some of our expressions.

Then we consider all the possible odd values of the index $m$ such that $m > m_K = 2 K$. That is, we consider all the possible indices 
\begin{eqnarray}
m = m'_k\equiv 2 k + 1 \qquad\mbox{and}\qquad n = n'_k =\mathcal{N} - m'_k = \mathcal{N} - 2 k - 1
\label{prime_indices}
\end{eqnarray}
where $k$ is an arbitrary integer such that $K \leqslant k \leqslant \mathbb{E} \left[ \left( \mathcal{N} + 1 \right) / 2 \right] - 1$. Since $m'_k$ is odd, the coefficient $Q_{\alpha_1 \cdots \alpha_{m'_k} \, , \, \beta_1 \cdots \beta_{n'_k}}$ is given by~(\ref{rel_m_odd})  for any $K \leqslant k \leqslant \mathbb{E} \left[ \left( \mathcal{N} + 1 \right) / 2 \right] - 1$.  
Accordingly, all the coefficients with an odd number of indices on the left of the comma can be substituted by using~(\ref{rel_m_odd}).

In order to formulate the problem, we define the partial sums $S_k$ by
\begin{eqnarray}
S_k \equiv \sum_{j=1}^{k} Q_{\bm{\alpha} \, , \, \bm{\beta}}^{(j)}
\label{S_k_def}
\end{eqnarray}
where $k$ is any integer such that $1 \leqslant k \leqslant n_K$. Note that for $k=n_K$ the sum $S_{n_K}$ is nothing but the right-hand side of~\eref{rel_m_K}. The idea is thus to show how the set of all the relations~\eref{rel_m_odd} for all the possible values of $m>m_K$ leads to $S_{n_K}=0$, i.e. precisely the result~\eref{rel_m_K}, hence theorem~\ref{theorem1}. This means that no information is gained from the relations obtained from~\eref{gen_rel_response_coef} for even indices $m$, and that only the relations corresponding to odd indices $m$ are relevant.

Our strategy is as follows. First, we note that the partial sums~(\ref{S_k_def}) obey the following recurrence:
\begin{eqnarray}
S_{k+1} = S_k + Q_{\bm{\alpha} \, , \, \bm{\beta}}^{(k+1)} \, .
\label{rec_S_k}
\end{eqnarray}
Using twice this recurrence relation we can simplify the partial sums of even indices $k$ by means of the relations~\eref{rel_m_odd}. Then we establish by induction that

\begin{proposition*}
For an arbitrary integer $l$ such that $1 \leqslant l \leqslant \mathbb{E} \left( \mathcal{N} / 2 \right) - K$, the partial sums~\eref{S_k_def} of even indices $k=2l$ can be expressed as
\begin{eqnarray}
\mathrm{H}_l : \hspace{1cm} S_{2l} = \sum_{j=2l+1}^{n_K} \gamma_{j}^{(2l)} Q_{\bm{\alpha} \, , \, \bm{\beta}}^{(j)}
\label{S_2l_ind_hyp_H_l}
\end{eqnarray}
where the quantities $\gamma_{j}^{(2l)}$ satisfy the recurrence
\begin{eqnarray}
\gamma_{j}^{(2l)} = - \frac{1}{2} \left( \gamma_{2l-1}^{(2l-2)} + 1 \right) \binom{j}{2l-1} + \gamma_{j}^{(2l-2)}
\label{gamma_j_2l_ind_hyp_H_l}
\end{eqnarray}
with $\gamma_j^{(0)}=0$ and the binomial coefficients
\begin{eqnarray}
\binom{p}{q} \equiv \frac{p!}{q! \, (p-q)!} \, .
\label{binom_coef_def}
\end{eqnarray}
\end{proposition*}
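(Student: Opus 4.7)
The plan is to prove the proposition by induction on $l$, using the recurrence (\ref{rec_S_k}) twice in order to pass from $S_{2l-2}$ to $S_{2l}$:
\begin{equation*}
S_{2l} = S_{2l-2} + Q^{(2l-1)}_{\bm{\alpha},\bm{\beta}} + Q^{(2l)}_{\bm{\alpha},\bm{\beta}},
\end{equation*}
and to eliminate the intermediate $Q^{(2l-1)}$ contribution by means of the odd-$m$ identity (\ref{rel_m_odd}). The key observation justifying this substitution is that $m_K + (2l-1) = 2K + 2l - 1$ is odd for every $l \geqslant 1$, so each individual coefficient $Q_{\bm{\alpha}\beta_{k_1}\cdots\beta_{k_{2l-1}},(\bm{\cdot})}$ making up $Q^{(2l-1)}_{\bm{\alpha},\bm{\beta}}$ may be rewritten via (\ref{rel_m_odd}).

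First I would verify the base case $l=1$. Applying (\ref{rel_m_odd}) to each single-index coefficient in $Q^{(1)}_{\bm{\alpha},\bm{\beta}}$ and regrouping, one obtains $Q^{(1)}_{\bm{\alpha},\bm{\beta}} = -\frac{1}{2}\sum_{j=2}^{n_K} j\, Q^{(j)}_{\bm{\alpha},\bm{\beta}}$, where each factor $j = \binom{j}{1}$ counts the ways of distinguishing one index among $j$. The resulting $j=2$ term exactly cancels $Q^{(2)}_{\bm{\alpha},\bm{\beta}}$ in $S_2 = Q^{(1)}_{\bm{\alpha},\bm{\beta}} + Q^{(2)}_{\bm{\alpha},\bm{\beta}}$, leaving $S_2 = -\frac{1}{2}\sum_{j=3}^{n_K} j\, Q^{(j)}_{\bm{\alpha},\bm{\beta}}$. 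This is precisely $\mathrm{H}_1$ with $\gamma_j^{(2)} = -j/2$, which matches the recurrence (\ref{gamma_j_2l_ind_hyp_H_l}) starting from $\gamma^{(0)} \equiv 0$.

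For the induction step, assuming $\mathrm{H}_{l-1}$, I would perform the same substitution inside $Q^{(2l-1)}_{\bm{\alpha},\bm{\beta}}$: replacing each $Q_{\bm{\alpha}\beta_{k_1}\cdots\beta_{k_{2l-1}},(\bm{\cdot})}$ via (\ref{rel_m_odd}) and regrouping through the combinatorial factor $\binom{2l-1+j'}{2l-1}$, which counts the splittings of a set of size $2l-1+j'$ into distinguished subsets of sizes $2l-1$ and $j'$, one obtains
\begin{equation*}
Q^{(2l-1)}_{\bm{\alpha},\bm{\beta}} = -\frac{1}{2} \sum_{j=2l}^{n_K} \binom{j}{2l-1} \, Q^{(j)}_{\bm{\alpha},\bm{\beta}}.
\end{equation*}
Inserting this together with $\mathrm{H}_{l-1}$ into the two-step expansion above, the total prefactor of $Q^{(2l-1)}_{\bm{\alpha},\bm{\beta}}$ is $(\gamma_{2l-1}^{(2l-2)}+1)$, and collecting the coefficients of $Q^{(j)}_{\bm{\alpha},\bm{\beta}}$ for $j \geqslant 2l+1$ reproduces the recurrence (\ref{gamma_j_2l_ind_hyp_H_l}) exactly.

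The subtle point, and what I expect to be the main obstacle, is to show that the coefficient of $Q^{(2l)}_{\bm{\alpha},\bm{\beta}}$ in the resulting expression vanishes, so that the sum in (\ref{S_2l_ind_hyp_H_l}) genuinely starts at $j=2l+1$. A direct inspection, using $\binom{2l}{2l-1}=2l$, shows that this coefficient equals $(\gamma_{2l}^{(2l-2)}+1) - l\,(\gamma_{2l-1}^{(2l-2)}+1)$; equivalently, if one formally extends the recurrence (\ref{gamma_j_2l_ind_hyp_H_l}) down to $j=2l$, the vanishing amounts to $\gamma_{2l}^{(2l)} = -1$. Establishing this auxiliary identity for every $l \geqslant 1$ is the delicate part of the argument. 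I would handle it by carrying it along as a second assertion within the same induction on $l$, and I suspect that it is precisely the combinatorial identity on coefficients of Euler polynomials that the authors flag as a by-product of their analysis.
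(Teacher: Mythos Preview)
Your proposal is correct and follows essentially the same route as the paper: induction on $l$, the same base case computation for $S_2$, the same substitution of \eref{rel_m_odd} inside $Q^{(2l-1)}_{\bm{\alpha},\bm{\beta}}$ with the binomial regrouping factor, and the same identification of $\gamma_{2l}^{(2l)}=-1$ as the crux of the induction step. One clarification: the paper does not prove $\gamma_{2l}^{(2l)}=-1$ by simply carrying it as a second assertion in the same induction on $l$ (that would require control of $\gamma_{2l-1}^{(2l-2)}$ and $\gamma_{2l}^{(2l-2)}$, not just the diagonal value $\gamma_{2l-2}^{(2l-2)}$); instead it devotes three lemmas to rewriting $\gamma_{2l}^{(2l)}$ via auxiliary coefficients $\Gamma_{j,k}^{(2l)}$, identifying those with Euler polynomial coefficients $e_i^{(n)}$, and then invoking classical identities for the latter---precisely the Euler-polynomial connection you anticipated.
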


The above proposition is proved in several steps. In subsection~\ref{S_2_4_subsec}, we initiate the recurrence by explicitly calculating the partial sum $S_2$. Subsection~\ref{ind_proof_subsec} is then devoted to the proof of the recurrence with three lemmas that are crucial to show by induction that the hypothesis $\mathrm{H}_l$ implies $\mathrm{H}_{l+1}$ for any $l$. Finally, the expression~\eref{S_2l_ind_hyp_H_l} of the so-obtained partial sum $S_{2l}$ is used in subsection~\ref{S_n_K_subsec} to prove theorem~\ref{theorem1}, namely that $S_{n_K}=0$. We then readily prove the corollary.


\subsection{Proof of \texorpdfstring{${\rm H}_1$}{H1}}
\label{S_2_4_subsec}

In order to initiate the recurrence, we first explicitly compute the partial sum $S_2$ that, in view of its definition~\eref{S_k_def}, is given by
\begin{eqnarray}
S_2 = Q_{\bm{\alpha} \, , \, \bm{\beta}}^{(1)} + Q_{\bm{\alpha} \, , \, \bm{\beta}}^{(2)} 
\label{S_2_def}
\end{eqnarray}
where we have
\begin{eqnarray}
Q_{\bm{\alpha} \, , \, \bm{\beta}}^{(1)} = \sum_{k_1} Q_{\bm{\alpha} \beta_{k_1} \, , \, (\boldsymbol{\cdot})}
\label{Q1}
\end{eqnarray}
 according to the expression~\eref{Q_j_expr_remind} of $Q^{(j)}$.

We begin by noting that the quantity $Q_{\bm{\alpha} \beta_{k_1} \, , \, (\boldsymbol{\cdot})}$ contains, in view of the notation~\eref{alpha_def}, $m_K+1 = 2K+1$ subscripts on the left of the comma. Therefore, we can use the relation~\eref{rel_m_odd} for $k=K$ (since $m'_K = 2K+1$ by definition) to rewrite $Q_{\bm{\alpha} \beta_{k_1} \, , \, (\boldsymbol{\cdot})}$ and we get
\begin{eqnarray}
Q_{\bm{\alpha} \beta_{k_1} \, , \, (\boldsymbol{\cdot})} = - \frac{1}{2} \sum_{j=1}^{n'_K} \sum_{k'_{1} \atop k'_{1} \neq k_{1}} \sum_{k'_{2} > k'_1 \atop k'_{2} \neq k_{1}} \cdots \sum_{k'_{j} > k'_{j-1} \atop k'_{j} \neq k_{1}} Q_{\bm{\alpha} \beta_{k_1} \beta_{k'_1} \cdots \beta_{k'_j} \, (\boldsymbol{\cdot})}
\label{L_al_beta_k1}
\end{eqnarray}
because of the definition~\eref{Q_j_expr_remind} for $Q^{(j)}$ and the notation~\eref{notation_sums}.

We emphasize that the sums over $k'_1, \ldots , k'_j$ in the right-hand side of~\eref{L_al_beta_k1} must indeed run from 1 to $n_K$, and not $n'_K$ as the relation~\eref{rel_m_odd} might suggest at first sight. By construction, the quantity $Q_{\bm{\alpha}\beta_{k_1} \, , \, (\boldsymbol{\cdot})}$ contains, on the right of the comma, subscripts $\beta_k$ where $k$ can take any value between 1 and $n_K$, under the condition that $k \neq k_1$. Therefore, we must indeed have in~\eref{L_al_beta_k1} sums over $k'_1, \ldots , k'_j$ from 1 to $n_K$ (and \textit{not} $n'_K$), under the additional conditions that $k'_1 \neq k_1$, \ldots , $k'_j \neq k_1$.

Substituting the expression~\eref{L_al_beta_k1} of $Q_{\bm{\alpha}\beta_{k_1} \, , \, (\boldsymbol{\cdot})}$ into~\eref{Q1} yields
\begin{eqnarray}
Q_{\bm{\alpha} \, , \, \bm{\beta}}^{(1)} = - \frac{1}{2} \sum_{j=1}^{n'_K} S_{2 \, , \, j}
\label{S_2_temp_expr}
\end{eqnarray}
where the quantity $S_{2 \, , \, j}$ is defined by
\begin{eqnarray}
S_{2 \, , \, j} \equiv \sum_{k_1} \sum_{k'_{1} \atop k'_{1} \neq k_{1}} \sum_{k'_{2} > k'_1 \atop k'_{2} \neq k_{1}} \cdots \sum_{k'_{j} > k'_{j-1} \atop k'_{j} \neq k_{1}} Q_{\bm{\alpha}\beta_{k_1} \beta_{k'_1} \cdots \beta_{k'_j} \, , \, (\boldsymbol{\cdot})}
\label{S_2_j_def}
\end{eqnarray}
for any $1 \leqslant j \leqslant n'_K$. We now identify in $S_{2 \, , \, j}$ the coefficient of the term $Q_{\bm{\alpha} \beta_{h_1} \cdots \beta_{h_{j+1}} \, , \, (\boldsymbol{\cdot})}$ with $h_1 < \ldots < h_{j+1}$.

Note that the latter condition allows to define a set $\mathscr{S}_{j+1} \equiv \{ h_1,\ldots,h_{j+1} \}$ of $j+1$ elements. We then need to identify all \textit{samples} (which are ordered collections of elements of a set) $\{ k_1, k'_1,\ldots,k'_j \}$ of $\mathscr{S}_{j+1}$ that are compatible with the constraints $k'_1 , \ldots , k'_j \neq k_1$ \textit{and} $k'_1 < \ldots < k'_j$. This is thus equivalent to determining the total number of elements of the set $\mathscr{S}_{j+1}$, for fixing a value of $k_1$ in the set $\mathscr{S}_{j+1}$ unambiguously selects a unique subset $\{ k'_1,\ldots,k'_j \}$ of $\mathscr{S}_{j+1}$ because of the constraints. Therefore, we readily see that $S_{2 \, , \, j}$ contains exactly $j+1$ terms $Q_{\bm{\alpha} \beta_{h_1} \cdots \beta_{h_{j+1}} \, , \, (\boldsymbol{\cdot})}$, with $h_1 < \ldots < h_{j+1}$, so that
\begin{eqnarray}
S_{2 \, , \, j} = \left( j+1 \right) \sum_{k_1} \sum_{k_{2} > k_1} \cdots \sum_{k_{j+1} > k_{j}} Q_{\bm{\alpha}\beta_{k_1} \cdots \beta_{k_{j+1}} \, , \, (\boldsymbol{\cdot})} = (j+1) \, Q_{\bm{\alpha}\, , \, \bm{\beta}}^{(j+1)}
\label{S_2_j_expr}
\end{eqnarray}
by recognizing the expression~\eref{Q_j_expr_remind} of $Q^{(j)}$.  Now, we substitute~\eref{S_2_j_expr} into~\eref{S_2_temp_expr}, then \eref{S_2_temp_expr} into~(\ref{S_2_def}), make the change of index $j' = j+1$, and note that $n'_K+1 = n_K$ in view of~\eref{m+n_K_def} and~(\ref{prime_indices}). As a result, we find
\begin{eqnarray}
S_2 = \sum_{j=3}^{n_K} \gamma_{j}^{(2)} Q_{\bm{\alpha} \, , \, \bm{\beta}}^{(j)} 
\label{S_2_final_expr}
\end{eqnarray}
where the coefficients $\gamma_{j}^{(2)}$ are defined by
\begin{eqnarray}
\gamma_{j}^{(2)} \equiv - \frac{j}{2} 
\label{gamma_j_2_def}
\end{eqnarray}
for any integer $j \geqslant 3$.  Therefore, the partial sum $S_2$, as given by~\eref{S_2_final_expr}, is precisely of the form~\eref{S_2l_ind_hyp_H_l}, while the coefficients $\gamma_{j}^{(2)}$ obtained in~\eref{gamma_j_2_def} satisfy the relation~(\ref{gamma_j_2l_ind_hyp_H_l}) because of $\gamma^{(0)}_j=0$. This shows that the hypothesis $\mathrm{H}_1$ is indeed true.

Now that our induction has been initialized, we must show that the induction hypothesis $\mathrm{H}_l$ remains true for any integer $l$ such that $1 < l \leqslant \mathbb{E} \left( \mathcal{N} / 2 \right) - K$, which we prove in next subsection~\ref{ind_proof_subsec}.


\subsection{Proof of \texorpdfstring{${\rm H}_{l+1}$ from ${\rm H}_l$}{the induction}}
\label{ind_proof_subsec}

Having initialized our induction in section~\ref{S_2_4_subsec}, we now proceed to show that the hypothesis~\eref{S_2l_ind_hyp_H_l}-\eref{gamma_j_2l_ind_hyp_H_l} is indeed true. That is, we assume our hypothesis $\mathrm{H}_l$ to be true for some integer $l$ such that $1 \leqslant l < \mathbb{E} \left( \mathcal{N} / 2 \right) -K$.  Consequently, the partial sum $S_{2l}$ is given by~(\ref{S_2l_ind_hyp_H_l}). Hence we must show that $\mathrm{H}_{l+1}$ remains true.

The first step is to write down the partial sum $S_{2l+2}$ from the definition~\eref{S_k_def}, and we have
\begin{eqnarray}
S_{2l+2} = S_{2l} + Q_{\bm{\alpha} \, , \, \bm{\beta}}^{(2l+1)} + Q_{\bm{\alpha} \, , \, \bm{\beta}}^{(2l+2)} \, .
\label{S_l_plus_1_def}
\end{eqnarray}
We now substitute our induction hypothesis~\eref{S_2l_ind_hyp_H_l} into~\eref{S_l_plus_1_def} and gather the terms $Q_{\bm{\alpha} \, , \, \bm{\beta}}^{(2l+1)}$ to obtain
\begin{eqnarray}
S_{2l+2} = \left( \gamma_{2l+1}^{(2l)} + 1 \right) Q_{\bm{\alpha} \, , \, \bm{\beta}}^{(2l+1)} + Q_{\bm{\alpha} \, , \, \bm{\beta}}^{(2l+2)} + \sum_{j=2l+2}^{n_K} \gamma_{j}^{(2l)} Q_{\bm{\alpha} \, , \, \bm{\beta}}^{(j)} \, .
\label{S_l_plus_1_full_expr}
\end{eqnarray}
Now, we get
\begin{eqnarray}
Q_{\bm{\alpha} \, , \, \bm{\beta}}^{(2l+1)}= \sum_{k_1} \sum_{k_{2} > k_1} \cdots \sum_{k_{2l+1} > k_{2l}} Q_{\bm{\alpha} \beta_{k_1} \cdots \beta_{k_{2l+1}},\, (\boldsymbol{\cdot})}
\label{Q_2l+1}
\end{eqnarray}
by using the expression~\eref{Q_j_expr_remind} for $Q^{(j)}$.
Note that each term of the sum in~\eref{Q_2l+1} possesses $m_K+2l+1 = 2(K+l)+1$ subscripts on the left of the comma [remembering the definition~\eref{alpha_def} of $\bm{\alpha}$]. Therefore, we can use the relation~\eref{rel_m_odd} for $k=K+l$ [since $m'_{K+l} = 2(K+l)+1$ by definition] to rewrite $Q_{\bm{\alpha} \beta_{k_1} \cdots \beta_{k_{2l+1}} \, , \, (\boldsymbol{\cdot})}$. Moreover using the definition~\eref{Q_j_expr_remind} and the notation~\eref{notation_sums}, we obtain
\begin{eqnarray}
\fl Q_{\bm{\alpha} \beta_{k_1} \cdots \beta_{k_{2l+1}} \, , \, (\boldsymbol{\cdot})} = - \frac{1}{2} \sum_{j=1}^{n'_{K+l}} \sum_{k'_{1} \atop k'_{1} \neq k_{i}} \sum_{k'_{2} > k'_1 \atop k'_{2} \neq k_{i}} \cdots \sum_{k'_{j} > k'_{j-1} \atop k'_{j} \neq k_{i}} Q_{\bm{\alpha} \beta_{k_1} \cdots \beta_{k_{2l+1}} \beta_{k'_1} \cdots \beta_{k'_j} \, , \, (\boldsymbol{\cdot})} 
\label{Q_al_beta_k1_k2l_plus_1}
\end{eqnarray}
where $k' \neq k_i$ merely means $k' \neq k_1, \ldots , k_{2l+1}$. Here again, we emphasize that the sums over $k'_1$, \ldots , $k'_j$ in the right-hand side of~\eref{Q_al_beta_k1_k2l_plus_1} must run from 1 to $n_K$. Indeed, by construction the quantity $Q_{\bm{\alpha} \beta_{k_1} \cdots \beta_{k_{2l+1}} \, , \, (\boldsymbol{\cdot})}$ here contains, on the right of the comma, subscripts $\beta_k$ where $k$ can take any value between 1 and $n_K$, under the condition that $k \neq k_1, \ldots , k_{2l+1}$.

Therefore, the quantity~(\ref{Q_2l+1}) can be written as
\begin{eqnarray}
Q_{\bm{\alpha} \, , \, \bm{\beta}}^{(2l+1)}= -\frac{1}{2} \sum_{j=1}^{n'_{K+l}} S_{2l+2,j}
\label{Q_2l+1_S_2l+2,j}
\end{eqnarray}
in terms of 
\begin{eqnarray}
\fl S_{2l+2 \, , \, j} \equiv \sum_{k_1} \sum_{k_{2} > k_1} \cdots \sum_{k_{2l+1} > k_{2l}} \sum_{k'_{1} \atop k'_{1} \neq k_{i}} \sum_{k'_{2} > k'_1 \atop k'_{2} \neq k_{i}} \cdots \sum_{k'_{j} > k'_{j-1} \atop k'_{j} \neq k_{i}} Q_{\bm{\alpha} \beta_{k_1} \cdots \beta_{k_{2l+1}} \beta_{k'_1} \cdots \beta_{k'_j} \, , \, (\boldsymbol{\cdot})}
\label{S_l_plus_1_j_def}
\end{eqnarray}
for any $1 \leqslant j \leqslant n'_{K+l}$. We now identify in $S_{2l+2 \, , \, j}$ the coefficient of the term $Q_{\bm{\alpha} \beta_{h_1} \cdots \beta_{h_{j+2l+1}} \, , \, (\boldsymbol{\cdot})}$ with $h_1 < \ldots < h_{j+2l+1}$. To this end, we note that the latter condition defines a set $\mathscr{S}_{j+2l+1} \equiv \{ h_1,\ldots,h_{j+2l+1} \}$ of $j+2l+1$ elements. Our task then consists in counting all samples (which we recall are ordered collections of elements of a set) $\{ k_1,\ldots,k_{2l+1},k'_1,\ldots,k'_j \}$ of $\mathscr{S}_{j+2l+1}$ that are compatible with the constraints $k'_1 , \ldots , k'_j \neq k_1, \ldots , k_{2l+1}$, $k_1 < \ldots < k_{2l+1}$ \textit{and} $k'_1 < \ldots < k'_j$. This problem is thus equivalent to determining the total number of \textit{subsets} (which are unordered collections of elements of a set) $\{ k_1,\ldots,k_{2l+1} \}$ of the set $\mathscr{S}_{j+2l+1}$. Indeed, note in particular that fixing $\{ k_1,\ldots,k_{2l+1} \}$ to be a definite subset of $\mathscr{S}_{j+2l+1}$ unambiguously selects a unique subset $\{ k'_1,\ldots,k'_j \}$ of $\mathscr{S}_{j+2l+1}$. This number is known to be $\binom{j+2l+1}{2l+1}$ (see e.g. theorem~4.1 in~\cite{Rys}).  Consequently, $S_{2l+2 \, , \, j}$ contains exactly $\binom{j+2l+1}{2l+1}$ terms $Q_{\bm{\alpha} \beta_{h_1} \cdots \beta_{h_{j+2l+1}} \, , \, (\boldsymbol{\cdot})}$ with $h_1 < \ldots < h_{j+2l+1}$. Therefore, we see that~\eref{S_l_plus_1_j_def} reads
\begin{eqnarray}
S_{2l+2 \, , \, j} = \binom{j+2l+1}{2l+1} \sum_{k_1} \sum_{k_{2} > k_1} \cdots \sum_{k_{j+2l+1} > k_{j+2l}} Q_{\bm{\alpha} \beta_{k_1} \cdots \beta_{k_{j+2l+1}} \, , \, (\boldsymbol{\cdot})}
\label{S_l_plus_1_j_temp_expr}
\end{eqnarray}
whereupon
\begin{eqnarray}
S_{2l+2 \, , \, j} = \binom{j+2l+1}{2l+1} \, Q_{\bm{\alpha} \, , \, \bm{\beta}}^{(j+2l+1)} 
\label{S_l_plus_1_j_expr}
\end{eqnarray}
after recognizing in~\eref{S_l_plus_1_j_temp_expr} the quantity $Q^{(j)}$ as given by~\eref{Q_j_expr_remind}.

Now, we substitute this expression of $S_{2l+2 \, , \, j}$ into~\eref{Q_2l+1_S_2l+2,j}, which is itself replaced into~(\ref{S_l_plus_1_full_expr}).
We make the change of index $j' = j+2l+1$, and note that $n'_{K+l}+2l+1 = n_K$ in view of~\eref{m+n_K_def} and~(\ref{prime_indices}). As a result, we see that the sum~\eref{S_l_plus_1_full_expr} becomes
\begin{eqnarray}
\fl S_{2l+2} = - \frac{1}{2} \left( \gamma_{2l+1}^{(2l)} + 1 \right) \sum_{j=2l+2}^{n_K} \binom{j}{2l+1} \, Q_{\bm{\alpha} \, , \, \bm{\beta}}^{(j)} + Q_{\bm{\alpha} \, , \, \bm{\beta}}^{(2l+2)} + \sum_{j=2l+2}^{n_K} \gamma_{j}^{(2l)} Q_{\bm{\alpha} \, , \, \bm{\beta}}^{(j)} \, .
\label{S_l_plus_1_temp_expr}
\end{eqnarray}
Therefore, we find that the partial sum $S_{2l+2}$ can be written as
\begin{eqnarray}
S_{2l+2} = \sum_{j=2l+3}^{n_K} \gamma_{j}^{(2l+2)} Q_{\bm{\alpha} \, , \, \bm{\beta}}^{(j)} + \left( \gamma_{2l+2}^{(2l+2)} + 1 \right) Q_{\bm{\alpha} \, , \, \bm{\beta}}^{(2l+2)}
\label{S_l_plus_1_temp_expr_gam}
\end{eqnarray}
with the coefficients
\begin{eqnarray}
\gamma_{j}^{(2l+2)} = - \frac{1}{2} \left( \gamma_{2l+1}^{(2l)} + 1 \right) \binom{j}{2l+1} + \gamma_{j}^{(2l)} 
\label{gamma_j_l_plus_1_def}
\end{eqnarray}
in agreement with the relation~(\ref{gamma_j_2l_ind_hyp_H_l}) for the induction hypothesis ${\rm H}_{l+1}$ instead of ${\rm H}_l$.

Comparing~(\ref{S_l_plus_1_temp_expr_gam}) with~(\ref{S_2l_ind_hyp_H_l}), we see that, in order to complete the proof of the recurrence, we need to show that the coefficient~\eref{gamma_j_l_plus_1_def} for $j=2l+2$ is merely equal to  $\gamma_{2l+2}^{(2l+2)}=-1$ independently of $l$.  This result is established with lemma \ref{lemma_gamma_2l_constant} below, which requires the two preliminary lemmas~\ref{lemma_gamma_from_Gamma} and~\ref{lemma_Gamma_from_Euler_coef}, as we now discuss.

Our first lemma expresses the coefficient $\gamma_{j}^{(2l)}$ in terms of the coefficients $\gamma_{k}^{(2)}$, as follows:

\begin{lemma}
Let~$l$ and~$j$ be two integers such that $l \geqslant 2$ and $j \geqslant 2l$. Let $\gamma_{j}^{(2l)}$ be a quantity that satisfies the recurrence~\eref{gamma_j_2l_ind_hyp_H_l} with $\gamma_{j}^{(2)} = -j/2$. Then the coefficient $\gamma_{j}^{(2l)}$ can be written in the form
\begin{eqnarray}
\gamma_{j}^{(2l)} = \gamma_{j}^{(2)} + \sum_{k=1}^{l-1} \Gamma_{j,k}^{(2l)} \left( \gamma_{2l-2k+1}^{(2)} + 1 \right)
\label{gamma_j_2l_from_Gamma_j_k_2l_def}
\end{eqnarray}
where the coefficient $\Gamma_{j,k}^{(2l)}$ must satisfy the recurrence
\begin{eqnarray}
\Gamma_{j,k}^{(2l)} = - \frac{1}{2} \sum_{i=1}^{k-1} \binom{2l-2i+1}{2l-2k+1} \Gamma_{j,i}^{(2l)} - \frac{1}{2} \binom{j}{2l-2k+1}
\label{Gamma_j_k_2l_def}
\end{eqnarray}
and where it must be understood that for $k=1$ we merely have
\begin{eqnarray}
\Gamma_{j,1}^{(2l)} = - \frac{1}{2} \binom{j}{2l-1} \, .
\label{Gamma_j_1_2l_def}
\end{eqnarray}
\label{lemma_gamma_from_Gamma}
\end{lemma}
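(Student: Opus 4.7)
My plan is to proceed by induction on $l$, using the recurrence~(\ref{gamma_j_2l_ind_hyp_H_l}) to derive the representation~(\ref{gamma_j_2l_from_Gamma_j_k_2l_def}) and the defining recurrence~(\ref{Gamma_j_k_2l_def}) for the coefficients $\Gamma_{j,k}^{(2l)}$ simultaneously. The base case $l=2$ is immediate: applying~(\ref{gamma_j_2l_ind_hyp_H_l}) with $l=2$ gives $\gamma_j^{(4)} = \gamma_j^{(2)} - \frac{1}{2}\binom{j}{3}(\gamma_3^{(2)}+1)$, which matches~(\ref{gamma_j_2l_from_Gamma_j_k_2l_def}) with the single coefficient $\Gamma_{j,1}^{(4)} = -\frac{1}{2}\binom{j}{3}$ in agreement with~(\ref{Gamma_j_1_2l_def}); the sum in~(\ref{Gamma_j_k_2l_def}) is empty when $k=1$, so no further verification is needed.

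For the inductive step (from $l-1$ to $l$, with $l \geqslant 3$), I would apply~(\ref{gamma_j_2l_ind_hyp_H_l}) and substitute the induction hypothesis twice: once to expand $\gamma_j^{(2l-2)}$ and once to expand $\gamma_{2l-1}^{(2l-2)}$, the latter being obtained by setting $j=2l-1$ in the representation. Collecting the resulting terms according to each generator $(\gamma_{2l-2k+1}^{(2)}+1)$ and shifting the summation index would directly yield
\begin{eqnarray*}
\Gamma_{j,1}^{(2l)} = -\frac{1}{2}\binom{j}{2l-1}, \qquad \Gamma_{j,k}^{(2l)} = \Gamma_{j,k-1}^{(2l-2)} - \frac{1}{2}\binom{j}{2l-1}\,\Gamma_{2l-1,k-1}^{(2l-2)}
\end{eqnarray*}
for $2 \leqslant k \leqslant l-1$. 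This already establishes the representation~(\ref{gamma_j_2l_from_Gamma_j_k_2l_def}); what remains is to show that the coefficients so produced satisfy~(\ref{Gamma_j_k_2l_def}).

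This last verification is, in my judgement, the main technical step. My plan is to insert the induction hypothesis~(\ref{Gamma_j_k_2l_def}) at level $l-1$ for both $\Gamma_{j,k-1}^{(2l-2)}$ and $\Gamma_{2l-1,k-1}^{(2l-2)}$ into the formula displayed above, and then relabel the inner summation index so that every inner binomial takes the form $\binom{2l-2i+1}{2l-2k+1}$. After this manipulation, two isolated terms survive outside the double sum, namely $-\frac{1}{2}\binom{j}{2l-2k+1}$ and $\frac{1}{4}\binom{j}{2l-1}\binom{2l-1}{2l-2k+1}$. The decisive observation is that, by~(\ref{Gamma_j_1_2l_def}), the second of these equals $-\frac{1}{2}\binom{2l-1}{2l-2k+1}\,\Gamma_{j,1}^{(2l)}$, which is precisely the missing $i=1$ term needed to extend the summation down from $i=2$ to $i=1$. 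Absorbing it reconstitutes the full sum $-\frac{1}{2}\sum_{i=1}^{k-1}\binom{2l-2i+1}{2l-2k+1}\Gamma_{j,i}^{(2l)}$ appearing in~(\ref{Gamma_j_k_2l_def}), which closes the induction.
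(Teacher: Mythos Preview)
Your proposal is correct and follows essentially the same route as the paper: induction on $l$, base case $l=2$, and in the inductive step substituting the representation at the previous level for both $\gamma_j^{(2l-2)}$ and $\gamma_{2l-1}^{(2l-2)}$ to obtain the transition formula $\Gamma_{j,k}^{(2l)} = \Gamma_{j,k-1}^{(2l-2)} + \Gamma_{j,1}^{(2l)}\Gamma_{2l-1,k-1}^{(2l-2)}$, then inserting the level-$(l-1)$ recurrence for $\Gamma$ and shifting the index to recover~(\ref{Gamma_j_k_2l_def}). Your explicit identification of the leftover term $\frac{1}{4}\binom{j}{2l-1}\binom{2l-1}{2l-2k+1}$ as the missing $i=1$ contribution is exactly the mechanism behind the paper's terse ``a change of index readily yields''.
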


The form~\eref{gamma_j_2l_from_Gamma_j_k_2l_def} of the quantity $\gamma_{j}^{(2l)}$ presents an important advantage over the recurrence~\eref{gamma_j_2l_ind_hyp_H_l}. Indeed, it explicitly expresses $\gamma_{j}^{(2l)}$, for any integers $l \geqslant 2$ and $j \geqslant 2l$, in terms of known values $\gamma_{i}^{(2)} = -i/2$. The counterpart is that the coefficients $\Gamma_{j,k}^{(2l)}$ involved in~\eref{gamma_j_2l_from_Gamma_j_k_2l_def} are themselves defined in the recursive form~\eref{Gamma_j_k_2l_def}. However, the latter turn out to be directly related to the coefficients of the Euler polynomials, as will be clear from lemma~\ref{lemma_Gamma_from_Euler_coef}.

\begin{proof}
We proceed by induction over the index $l$, and show that the induction hypothesis defined by~\eref{gamma_j_2l_from_Gamma_j_k_2l_def}
with the coefficients $\Gamma_{j,k}^{(2l)}$ satisfying~\eref{Gamma_j_k_2l_def} is true for any integer $l \geqslant 2$.

We first consider the case $l=2$, and explicitly construct the coefficient $\gamma_{j}^{(4)}$ from its definition~\eref{gamma_j_2l_ind_hyp_H_l}. Comparing with~\eref{gamma_j_2l_from_Gamma_j_k_2l_def}, we find that
\begin{eqnarray*}
\Gamma_{j,1}^{(4)} \equiv - \frac{1}{2} \binom{j}{3}
\end{eqnarray*}
in agreement with~\eref{Gamma_j_1_2l_def}, hence the hypothesis is true for $l=2$.

We now assume our hypothesis to be true for some integer $l \geqslant 2$. Then we compute $\gamma_{j}^{(2l+2)}$ from its definition~\eref{gamma_j_2l_ind_hyp_H_l} and replace therein $\gamma_{j}^{(2l)}$ by its expression~\eref{gamma_j_2l_from_Gamma_j_k_2l_def}.  Introducing
\begin{eqnarray}
\Gamma_{j,1}^{(2l+2)} \equiv - \frac{1}{2} \binom{j}{2l+1}
\label{Gamma_j_1_ind}
\end{eqnarray}
in agreement with the definition~\eref{Gamma_j_1_2l_def}, we hence get
\begin{eqnarray}
\fl \gamma_{j}^{(2l+2)} = \gamma_{j}^{(2)} + \Gamma_{j,1}^{(2l+2)} \left( \gamma_{2l+1}^{(2)} + 1 \right) + \sum_{k=1}^{l-1} \left( \Gamma_{j,k}^{(2l)} + \Gamma_{j,1}^{(2l+2)} \Gamma_{2l+1,k}^{(2l)} \right) \left( \gamma_{2l-2k+1}^{(2)} + 1 \right) .
\label{gamma_from_Gamma_ind_temp}
\end{eqnarray}
Now, we define the coefficients $\Gamma_{j,k}^{(2l+2)}$ by
\begin{eqnarray}
\Gamma_{j,k}^{(2l+2)} \equiv \Gamma_{j,k-1}^{(2l)} + \Gamma_{j,1}^{(2l+2)} \Gamma_{2l+1,k-1}^{(2l)}
\label{Gamma_ind_def}
\end{eqnarray}
with $k \geqslant 2$. Substituting this definition into~\eref{gamma_from_Gamma_ind_temp} and making the change of index $k' = k+1$ readily yields
\begin{eqnarray}
\gamma_{j}^{(2l+2)} = \gamma_{j}^{(2)} + \sum_{k=1}^{l} \Gamma_{j,k}^{(2l+2)} \left( \gamma_{2l-2k+3}^{(2)} + 1 \right)  .
\label{gamma_from_Gamma_final_expr}
\end{eqnarray}
Therefore, the coefficient $\gamma_{j}^{(2l+2)}$ is already written in the required form for the induction hypothesis to be true at next step $l+1$.

Finally, we must show that the quantities $\Gamma_{j,k}^{(2l+2)}$ defined by~\eref{Gamma_ind_def} are recursively related precisely in the way~\eref{Gamma_j_k_2l_def}. In~\eref{Gamma_ind_def} we express $\Gamma_{j,k-1}^{(2l)}$ and $\Gamma_{2l+1,k-1}^{(2l)}$ by means of the induction hypothesis~\eref{Gamma_j_k_2l_def}. The result is then simplified by using the definition~\eref{Gamma_ind_def} for $\Gamma_{j,i+1}^{(2l+2)}$. A change of index $i' = i+1$ hence readily yields
\begin{eqnarray}
\Gamma_{j,k}^{(2l+2)} = - \frac{1}{2} \sum_{i=1}^{k-1} \binom{2l-2i+3}{2l-2k+3} \Gamma_{j,i}^{(2l+2)} - \frac{1}{2} \binom{j}{2l-2k+3} \, .
\label{Gamma_ind_final_expr}
\end{eqnarray}

Therefore, we showed that the coefficient $\gamma_{j}^{(2l+2)}$ can be expressed through~\eref{gamma_from_Gamma_final_expr} in terms of quantities $\Gamma_{j,k}^{(2l+2)}$ that must satisfy the recursive relation~\eref{Gamma_ind_final_expr}. This clearly shows that the induction hypothesis~\eref{gamma_j_2l_from_Gamma_j_k_2l_def}-\eref{Gamma_j_1_2l_def} holds for any integer $l \geqslant 2$.
\end{proof}

An interesting consequence of lemma \ref{lemma_gamma_from_Gamma} is that the quantities $\Gamma_{j,k}^{(2l)}$ given by~\eref{Gamma_j_k_2l_def} are related to the Euler polynomials. Before we explicitly get this result in lemma~\ref{lemma_Gamma_from_Euler_coef}, we first recall a few well-established results regarding these particular polynomials.

The Euler polynomial $E_n(x)$, with $n \geqslant 0$ an integer and $x$ a real number, are defined with their generating function \cite{GradRyz, AbrSteg} by
\begin{eqnarray}
\frac{2\, {\rm e}^{xt}}{{\rm e}^t+1} = \sum_{n=0}^{\infty} E_n(x) \, \frac{t^n}{n!} \, .
\label{Euler_poly_def}
\end{eqnarray}
They can be written as
\begin{eqnarray}
E_n (x) = \sum_{i=0}^{n} e_{i}^{(n)} x^{i}
\label{Euler_coef_def}
\end{eqnarray}
in terms of the coefficients $e_{i}^{(n)}$ such that $e_n^{(n)}=1$.
Setting $x=0$ into~\eref{Euler_coef_def} and using a well-known property of Euler polynomials \cite{AbrSteg} readily shows that the constant term $e_{0}^{(n)}$ of $E_n (x)$ is merely given by
\begin{eqnarray}
e_{0}^{(n)} = E_n (0) = -E_n(1)
\label{constant_term_Euler}
\end{eqnarray}
for any integer $n \geqslant 0$.  Moreover, it is known that
\begin{eqnarray}
e_0^{(0)}=E_0(0)=1 \qquad \mbox{and}\qquad e_{0}^{(2k)} = E_{2k}(0)=0
\label{e_0_even_vanish}
\end{eqnarray}
for any integer $k \geqslant 1$ \cite{GradRyz, AbrSteg}.  Therefore, except for $e_0^{(0)}$, the constant terms of Euler polynomials are not identically zero only for odd indices.

Alternatively, the constant terms $e_0^{(n)}$ can be related to coefficients $e_i^{(n)}$ of the Euler polynomials. Indeed, the latter are known to satisfy the identity \cite{AbrSteg}
\begin{eqnarray}
E_n (h+x) = \sum_{i=0}^{n} \binom{n}{i} E_i (h) \, x^{n-i}
\label{E_k_h_plus_x}
\end{eqnarray}
for any integer $n \geqslant 0$. Setting $h=0$ into~\eref{E_k_h_plus_x} hence yields, in view of~\eref{Euler_coef_def} and~\eref{constant_term_Euler},
\begin{eqnarray}
\sum_{i=0}^{n} e_{i}^{(n)} x^{i} = \sum_{i=0}^{n} \binom{n}{i} e_{0}^{(i)} x^{n-i} \, .
\label{Euler_poly_id}
\end{eqnarray}
Identifying the terms corresponding to $i=n-p$ (resp. $i=p$) in the left- (resp. right-) hand side of~\eref{Euler_poly_id} hence shows that 
\begin{eqnarray}
e_{n-p}^{(n)} = e_{0}^{(p)} \binom{n}{p}
\label{e_from_e_0}
\end{eqnarray}
for any integers $n$ and $p$ such that $n \geqslant 0$ and $0 \leqslant p \leqslant n$. It is thus worth noting that this relation readily implies, in view of~\eref{e_0_even_vanish}, that
\begin{eqnarray}
e_{n-2q}^{(n)} = 0
\label{e_k_min_2p_vanish}
\end{eqnarray}
for any integers $n$ and $q$ such that $n \geqslant 2$ and $1 \leqslant q \leqslant \mathbb{E} (n/2)$.

Now, the above well-known properties can be adequately used to rewrite the quantity $\Gamma_{2l,k}^{(2l)}$, which occurs in the expression~\eref{gamma_j_2l_from_Gamma_j_k_2l_def} of $\gamma_{2l}^{(2l)}$. Indeed, we show that $\Gamma_{2l,k}^{(2l)}$ can be expressed in terms of the coefficients $e_i^{(n)}$ of Euler polynomials as

\begin{lemma}
Let $l$ and $k$ be two integers such that $l \geqslant 2$ and $1 \leqslant k \leqslant l-1$. Let $\Gamma_{j,k}^{(2l)}$ be a quantity that satisfies~\eref{Gamma_j_k_2l_def}-\eref{Gamma_j_1_2l_def} for any integer $j \geqslant 2l$. Then the particular coefficient $\Gamma_{2l,k}^{(2l)}$ is related to the coefficients $e_i$ of Euler polynomials through
\begin{eqnarray}
\Gamma_{2l,k}^{(2l)} = e_{2l-2k+1}^{(2l)} = e_{0}^{(2k-1)} \binom{2l}{2k-1}  .
\label{Gamma_2l_from_Euler_coef}
\end{eqnarray}
\label{lemma_Gamma_from_Euler_coef}
\end{lemma}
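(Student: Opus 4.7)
The plan is to prove the two equalities in~\eref{Gamma_2l_from_Euler_coef} separately. The second one, $e_{2l-2k+1}^{(2l)} = e_0^{(2k-1)}\binom{2l}{2k-1}$, is immediate from~\eref{e_from_e_0} applied with $n=2l$ and $p=2k-1$. The substantive task is therefore to establish the first equality $\Gamma_{2l,k}^{(2l)} = e_{2l-2k+1}^{(2l)}$, which I would prove by induction on $k$ for a fixed integer $l\geq 2$, working with the equivalent target $\Gamma_{2l,k}^{(2l)} = e_0^{(2k-1)}\binom{2l}{2l-2k+1}$ (using the symmetry $\binom{2l}{2k-1}=\binom{2l}{2l-2k+1}$).

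For the base case $k=1$, equation~\eref{Gamma_j_1_2l_def} at $j=2l$ gives $\Gamma_{2l,1}^{(2l)}=-l$, while $E_1(x)=x-1/2$ yields $e_0^{(1)}=-1/2$, so $e_0^{(1)}\binom{2l}{2l-1}=-l$ as required. For the inductive step, I would set $j=2l$ in the recurrence~\eref{Gamma_j_k_2l_def} and substitute the induction hypothesis for each $\Gamma_{2l,i}^{(2l)}$ with $1\leq i\leq k-1$. The key algebraic observation is the elementary binomial identity
\[
\binom{2l}{2i-1}\binom{2l-2i+1}{2l-2k+1}=\binom{2l}{2l-2k+1}\binom{2k-1}{2i-1},
\]
which is verified by expanding both sides as ratios of factorials. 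This factorization lets me pull $\binom{2l}{2l-2k+1}$ out of the sum, and reduces the claim $\Gamma_{2l,k}^{(2l)}=e_0^{(2k-1)}\binom{2l}{2l-2k+1}$ to the purely combinatorial identity
\[
2\,e_0^{(2k-1)} + 1 + \sum_{i=1}^{k-1}\binom{2k-1}{2i-1}\,e_0^{(2i-1)}=0.
\]

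To close the induction I would recognize this last equation as a classical recurrence among the constants $e_0^{(n)}$. It follows by setting $x=1$ in~\eref{Euler_poly_id} for $n=2k-1$, using $E_n(0)+E_n(1)=0$ from~\eref{constant_term_Euler} (i.e.\ $E_n(1)=-e_0^{(n)}$ for $n\geq 1$), and then discarding every term of even index $i\geq 2$ via~\eref{e_0_even_vanish} while keeping the contribution $\binom{2k-1}{0}\,e_0^{(0)}=1$ coming from $e_0^{(0)}=1$. The main obstacle is spotting the factorization identity above and matching the leftover sum to the standard Euler-number recurrence; once both observations are in place the induction closes cleanly, and the identity referred to in the introduction as ``a general identity satisfied by coefficients of Euler polynomials'' emerges naturally as an alternative packaging of this recurrence.
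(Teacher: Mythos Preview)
Your proposal is correct and follows essentially the same route as the paper: induction on $k$ for fixed $l$, the same binomial factorization (the paper writes it as $\binom{2l-2i+1}{2k-2i+2}\binom{2l}{2i-1}=\binom{2l}{2k+1}\binom{2k+1}{2i-1}$ at the $k\!\to\!k+1$ step, which is your identity after an index shift and binomial symmetry), and the same closing appeal to $E_n(1)=-e_0^{(n)}$ together with the vanishing of $e_0^{(2q)}$ for $q\geq 1$. The only cosmetic difference is that the paper phrases the final step via the coefficients $e_{2k-2i}^{(2k+1)}$ rather than isolating your recurrence $2e_0^{(2k-1)}+1+\sum_{i=1}^{k-1}\binom{2k-1}{2i-1}e_0^{(2i-1)}=0$ explicitly; your closing remark about the ``general identity'' mentioned in the introduction is a slight overreach, since that identity~\eref{gamma_2l_2l_from_Euler_coef} actually arises from combining all three lemmas rather than from this one alone.
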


The gain offered by the form~\eref{Gamma_2l_from_Euler_coef} of the quantity $\Gamma_{2l,k}^{(2l)}$ as compared to the original definition~\eref{Gamma_j_k_2l_def} should be clear. First, the latter expresses $\Gamma_{2l,k}^{(2l)}$ recursively in terms of all the $\Gamma_{2l,i}^{(2l)}$, with $1 \leqslant i \leqslant k-1$, what is no longer the case in~\eref{Gamma_2l_from_Euler_coef}. Second, this direct relation between $\Gamma_{2l,k}^{(2l)}$ and $e_i^{(n)}$ allows us to manipulate the quantities $\Gamma_{2l,k}^{(2l)}$, and thus in view of~\eref{gamma_j_2l_from_Gamma_j_k_2l_def} the coefficient $\gamma_{2l}^{(2l)}$ itself, by means of known identities regarding the coefficients of Euler polynomials. This latter point then appears to be of crucial importance to prove the main result of this subsection, namely lemma~\ref{lemma_gamma_2l_constant}.

\begin{proof}
Let $l \geqslant 2$ be a fixed integer. Setting $j=2l$ into the definition~\eref{Gamma_j_k_2l_def}, we get for the coefficient $\Gamma_{2l,k}^{(2l)}$
\begin{eqnarray}
\Gamma_{2l,k}^{(2l)} = - \frac{1}{2} \sum_{i=1}^{k-1} \binom{2l-2i+1}{2k-2i} \Gamma_{2l,i}^{(2l)} - \frac{1}{2} \binom{2l}{2k-1}
\label{Gamma_2l_k_2l_def}
\end{eqnarray}
for any integer $k$ such that $1 \leqslant k \leqslant l-1$.

We now proceed by induction over the index $k$ of $\Gamma_{2l,k}^{(2l)}$, and show that the induction hypothesis~\eref{Gamma_2l_from_Euler_coef} holds for any $1 \leqslant k \leqslant l-1$.

We first set $k=1$ into~\eref{Gamma_2l_k_2l_def}. Noting that $e_0^{(1)}=-1/2$ readily shows that \eref{Gamma_2l_from_Euler_coef} holds for $k=1$.

Now, we assume our hypothesis~\eref{Gamma_2l_from_Euler_coef} to be true for some integer $k$, $1 \leqslant k < l-1$.  We first compute the coefficient $\Gamma_{2l,k+1}^{(2l)}$ from its definition~\eref{Gamma_2l_k_2l_def} and make use of the induction hypothesis~\eref{Gamma_2l_from_Euler_coef} for $k=i$. Since
\begin{eqnarray*}
\binom{2l-2i+1}{2k-2i+2} \binom{2l}{2i-1} = \binom{2l}{2k+1} \binom{2k+1}{2i-1} ,
\end{eqnarray*}
we get, after the change of index $i'=i-1$, then making use of the identity~\eref{e_from_e_0}, and noting that $e_{2k+1}^{(2k+1)}=1$,
\begin{eqnarray}
\Gamma_{2l,k+1}^{(2l)} = - \frac{1}{2} \binom{2l}{2k+1} \left( e_{2k+1}^{(2k+1)} + \sum_{i=0}^{k-1} e_{2k-2i}^{(2k+1)} \right) .
\label{Gamma_2l_k_plus_1_2l_expr_temp}
\end{eqnarray}

Now, in view of the definition~\eref{Euler_coef_def} of the Euler polynomial $E_{2k+1} (x)$ and the property~\eref{e_k_min_2p_vanish} of the coefficients $e_{2k+1-2p}^{(2k+1)}$, it is clear that
\begin{eqnarray}
e_{2k+1}^{(2k+1)} + \sum_{i=0}^{k-1} e_{2k-2i}^{(2k+1)} = E_{2k+1} (1) - e_{0}^{(2k+1)} = - 2\, e_{0}^{(2k+1)}
\label{lin_comb_Euler_coef}
\end{eqnarray}
where we used~\eref{constant_term_Euler} to write the last equality. Combining~\eref{Gamma_2l_k_plus_1_2l_expr_temp} with~\eref{lin_comb_Euler_coef} readily yields~\eref{Gamma_2l_from_Euler_coef} for $k$ replaced by $k+1$. Therefore, the relation~\eref{Gamma_2l_from_Euler_coef} holds for any integer $k$ such that $1 \leqslant k \leqslant l-1$. 
\end{proof}

Finally, we use the above results, lemmas~\ref{lemma_gamma_from_Gamma} and~\ref{lemma_Gamma_from_Euler_coef}, to show that the coefficient $\gamma_{2l}^{(2l)}$ satisfies

\begin{lemma}
Let $l$ be an integer such that $l \geqslant 2$. Let $\gamma_{j}^{(2l)}$ be a quantity that satisfies~\eref{gamma_j_2l_ind_hyp_H_l} for any integer $j$ such that $j \geqslant 2l$. Then the particular coefficient $\gamma_{2l}^{(2l)}$ is constant and given by
\begin{eqnarray}
\gamma_{2l}^{(2l)} = - 1 \, .
\label{gamma_2l_2l_constant}
\end{eqnarray}
\label{lemma_gamma_2l_constant}
\end{lemma}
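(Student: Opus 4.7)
The plan is to combine Lemmas~\ref{lemma_gamma_from_Gamma} and~\ref{lemma_Gamma_from_Euler_coef} into an explicit closed form for $\gamma_{2l}^{(2l)}$ and then to evaluate that form with the help of the Euler-polynomial identities already collected in this subsection. Specializing~\eref{gamma_j_2l_from_Gamma_j_k_2l_def} at $j=2l$ and substituting $\Gamma_{2l,k}^{(2l)}=e_{2l-2k+1}^{(2l)}$ from~\eref{Gamma_2l_from_Euler_coef}, together with the explicit values $\gamma_{2l}^{(2)}=-l$ and $\gamma_{2l-2k+1}^{(2)}+1=-(2l-2k-1)/2$, and reindexing via $j=2l-2k+1$, I would obtain $\gamma_{2l}^{(2l)} = -l - \frac{1}{2}\sum(j-2)\,e_{j}^{(2l)}$, where the sum runs over the odd integers $3\leqslant j\leqslant 2l-1$. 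The whole problem thus reduces to evaluating this finite sum of odd-indexed coefficients of $E_{2l}(x)$.

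Next I would produce two auxiliary evaluations over the odd indices $j\in\{1,3,\ldots,2l-1\}$. Substituting $x=1$ into~\eref{Euler_coef_def} with $n=2l$, invoking~\eref{constant_term_Euler} and~\eref{e_0_even_vanish}, and using the parity property~\eref{e_k_min_2p_vanish} to kill the even-indexed terms with $0\leqslant j\leqslant 2l-2$, yields $\sum_{j}e_{j}^{(2l)}=-1$. Differentiating~\eref{E_k_h_plus_x} once in~$x$ and setting $x=0$ gives the standard derivative rule $E_{n}'(h)=n\,E_{n-1}(h)$; applying it at $h=1$, again using~\eref{constant_term_Euler} and the parity property, produces $\sum_{j}j\,e_{j}^{(2l)}=-2l\bigl(1+e_{0}^{(2l-1)}\bigr)$.

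Finally I would peel off the $j=1$ contribution in the target sum by means of the instance $n=2l$, $p=2l-1$ of~\eref{e_from_e_0}, namely $e_{1}^{(2l)}=2l\,e_{0}^{(2l-1)}$, and combine this with the two auxiliary evaluations. Writing $(j-2)\,e_{j}^{(2l)}=j\,e_{j}^{(2l)}-2\,e_{j}^{(2l)}$ and summing first over all odd $1\leqslant j\leqslant 2l-1$, then subtracting the $j=1$ contribution, the pieces proportional to $e_{0}^{(2l-1)}$ cancel exactly, leaving $\sum_{3\leqslant j\leqslant 2l-1,\ j\ \mathrm{odd}}(j-2)\,e_{j}^{(2l)}=2-2l$. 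Substituting back, one finds $\gamma_{2l}^{(2l)}=-l-\frac{1}{2}(2-2l)=-1$, which is the claim.

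The main obstacle is arranging the sum so that the residual $e_{0}^{(2l-1)}$ terms annihilate each other: the factor $(j-2)$ is exactly what makes the derivative-rule contribution cancel against the $j=1$ term extracted through~\eref{e_from_e_0}. Once that cancellation is anticipated, the remainder is a routine manipulation of the Euler-polynomial identities already at hand, which is also why the authors point out in the introduction that their analysis yields, as a by-product, a general identity on the coefficients of Euler polynomials.
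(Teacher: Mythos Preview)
Your proposal is correct and follows essentially the same route as the paper: specialize Lemma~\ref{lemma_gamma_from_Gamma} at $j=2l$, insert Lemma~\ref{lemma_Gamma_from_Euler_coef}, and reduce to Euler-polynomial identities in which the $e_{0}^{(2l-1)}$ contributions cancel. The only cosmetic difference is organizational: you reindex to a sum over odd $j$ of $(j-2)\,e_{j}^{(2l)}$ and handle the $j$-weighted part via the derivative rule $E_{n}'(h)=n\,E_{n-1}(h)$, whereas the paper keeps the $k$-index, uses the binomial identity $(2l-2k+1)\binom{2l}{2k-1}=2l\binom{2l-1}{2k-1}$, and evaluates $\sum_{k}\binom{2l-1}{2k-1}e_{0}^{(2k-1)}$ and $\sum_{k}\binom{2l}{2k-1}e_{0}^{(2k-1)}$ directly through~\eref{E_k_h_plus_x} at $h=0$, $x=1$.
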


This lemma is the most important result of the present subsection, for it provides the end of our proof of the induction hypothesis ${\rm H}_l$ given by~\eref{S_2l_ind_hyp_H_l}. Indeed, to prove that the hypothesis $\mathrm{H}_{l+1}$ remains true requires that $\gamma_{2l+2}^{(2l+2)} + 1 = 0$, which is ensured by the above lemma. This cancels the term $Q_{\bm{\alpha} \, , \, \bm{\beta}}^{(2l+2)}$ in the expression~\eref{S_l_plus_1_temp_expr_gam} of the partial sum $S_{2l+2}$, which can then be written in the desired form.

\begin{proof}
Setting $j=2l$ into~\eref{gamma_j_2l_from_Gamma_j_k_2l_def} and remembering that $\gamma_{j}^{(2)} = -j/2$ for any integer $j \geqslant 2$, we get 
\begin{eqnarray}
\gamma_{2l}^{(2l)} = -l - \frac{1}{2} \sum_{k=1}^{l-1} \left( 2l-2k+1 \right) \Gamma_{2l,k}^{(2l)} + \sum_{k=1}^{l-1} \Gamma_{2l,k}^{(2l)} \, .
\label{gamma_2l_2l_Gamma}
\end{eqnarray}
Using~\eref{Gamma_2l_from_Euler_coef} of lemma~\ref{lemma_Gamma_from_Euler_coef} to rewrite $\Gamma_{2l,k}^{(2l)}$, and noting that
\begin{eqnarray*}
\left( 2l-2k+1 \right) \binom{2l}{2k-1} = 2l \binom{2l-1}{2k-1}  ,
\end{eqnarray*}
we obtain
\begin{eqnarray}
\gamma_{2l}^{(2l)} = -l - l \sum_{k=1}^{l-1} \binom{2l-1}{2k-1} e_0^{(2k-1)} + \sum_{k=1}^{l-1} \binom{2l}{2k-1} e_0^{(2k-1)} \, .
\label{gamma_2l_2l_Euler}
\end{eqnarray}
Now, the properties~\eref{constant_term_Euler}-\eref{e_0_even_vanish} and~\eref{e_from_e_0}-\eref{e_k_min_2p_vanish} of Euler polynomials~\eref{Euler_coef_def} at $x=1$ give
\begin{eqnarray}
\fl&&\sum_{k=1}^{l-1} \binom{2l-1}{2k-1} e_0^{(2k-1)} = E_{2l-1}(1)-1-e_0^{(2l-1)} =-1-2\, e_0^{(2l-1)}  \qquad\qquad\mbox{and} \label{Euler_sum_(2l-1)} \\
\fl&&\sum_{k=1}^{l-1} \binom{2l}{2k-1} e_0^{(2k-1)} = E_{2l}(1)-1-\binom{2l}{2l-1} e_0^{(2l-1)} =-1-2l\, e_0^{(2l-1)} \, . 
\label{Euler_sum_(2l)}
\end{eqnarray}
Substituting~\eref{Euler_sum_(2l-1)} and~\eref{Euler_sum_(2l)} back into~\eref{gamma_2l_2l_Euler}, readily shows the desired result~\eref{gamma_2l_2l_constant}.
\end{proof}

It is worth noting that as a direct by-product of the above results, lemmas~\ref{lemma_gamma_from_Gamma}-\ref{lemma_gamma_2l_constant}, we obtain a general identity satisfied by coefficients of Euler polynomials. Indeed, we recall that the latter are known to satisfy
\begin{eqnarray}
e_{2k}^{(2k+1)} = - \frac{2k+1}{2}
\label{second_highest_order_Euler_coef}
\end{eqnarray}
for any integer $k \geqslant 0$. Since we have by construction $\gamma_{j}^{(2)} \equiv -j/2$, for any $j \geqslant 2$, it is then clear from~\eref{second_highest_order_Euler_coef} that we have
\begin{eqnarray}
\gamma_{2l-2k+1}^{(2)} + 1 = e_{2l-2k-2}^{(2l-2k-1)}
\label{gamma_plus_1_from_Euler_coef}
\end{eqnarray}
for any integers $l$ and $k$ such that $l \geqslant 2$ and $1 \leqslant k \leqslant l-1$. Therefore, combining lemmas~\ref{lemma_gamma_from_Gamma}-\ref{lemma_gamma_2l_constant} with~\eref{gamma_plus_1_from_Euler_coef} yields the general identity
\begin{eqnarray}
\sum_{k=1}^{l-1} e_{2l-2k+1}^{(2l)} \, e_{2l-2k-2}^{(2l-2k-1)} = l-1
\label{gamma_2l_2l_from_Euler_coef}
\end{eqnarray}
for any $l \geqslant 2$. This provides a simple closed form expression of a particular linear combination of products of coefficients of Euler polynomials.

With the result of lemma~\ref{lemma_gamma_2l_constant} at hand, we are in position to show that the hypothesis $\mathrm{H}_l$, as given by~\eref{S_2l_ind_hyp_H_l}-\eref{gamma_j_2l_ind_hyp_H_l}, is true for any integer $l$ such that $1 \leqslant l \leqslant \mathbb{E} \left( \mathcal{N} / 2 \right) - K$. Indeed, it is precisely the latter lemma that allows us to write the partial sum $S_{2l+2}$ in the form~\eref{S_2l_ind_hyp_H_l}, as can be seen from the identity~\eref{gamma_2l_2l_constant} that readily yields
\begin{eqnarray}
\gamma_{2l+2}^{(2l+2)} + 1 = 0 \, .
\label{gamma_ind_identity}
\end{eqnarray}
We now substitute~\eref{gamma_ind_identity} into~\eref{S_l_plus_1_temp_expr_gam} to get
\begin{eqnarray}
S_{2l+2} = \sum_{j=2(l+1)+1}^{n_K} \gamma_{j}^{(2l+2)} Q_{\bm{\alpha} \, , \, \bm{\beta}}^{(j)} \, .
\label{S_l_plus_1_final_expr}
\end{eqnarray}
Combining the expression~\eref{S_l_plus_1_final_expr} of $S_{2l+2}$ with the definition~\eref{gamma_j_l_plus_1_def} of the coefficients $\gamma_{j}^{(2l+2)}$ readily shows that the hypothesis $\mathrm{H}_{l+1}$ is indeed true.

Hence we showed (i) that the hypothesis $\mathrm{H}_l$, as defined by~\eref{S_2l_ind_hyp_H_l}-\eref{gamma_j_2l_ind_hyp_H_l}, is true for $l=1$ (this was done in subsection~\ref{S_2_4_subsec}), and (ii) that $\mathrm{H}_{l+1}$ must be true if $\mathrm{H}_l$ is. Therefore, the hypothesis $\mathrm{H}_l$ is indeed true for any integer $l$ such that $1 \leqslant l \leqslant \mathbb{E} \left( \mathcal{N} / 2 \right) - K$, and the partial sum $S_{2l}$ is given by~(\ref{S_2l_ind_hyp_H_l}). We have thus proved the proposition.

The validity of the expression~\eref{S_2l_ind_hyp_H_l} of the partial sums $S_{2l}$ now enables us to tackle our initial problem, namely to show that the total sum $S_{n_K}$ vanishes, and thus that theorem~\ref{theorem1} is indeed true. Finally, this allows us to prove the corollary.


\subsection{Proof of theorem \texorpdfstring{\ref{theorem1}}{} and corollary}\label{S_n_K_subsec}

The above analysis shows that the partial sum $S_{2l}$ can be written in the form~\eref{S_2l_ind_hyp_H_l} for any integer $l$ such that $1 \leqslant l \leqslant \mathbb{E} \left( \mathcal{N} / 2 \right) - K$. This result can now be adequately used to compute the total sum $S_{n_K}$. We do this by distinguishing the two cases of an even and of an odd total number $\mathcal{N}$ of subscripts.

We first suppose that $\mathcal{N}$ is even, say of the form $\mathcal{N}=2N$, $N \geqslant 1$, so that $\mathbb{E} \left( \mathcal{N} / 2 \right) - K = N-K$ and, from the definition~\eref{m+n_K_def}, $n_K = 2N-2K$. Hence setting $l=N-K$ into~\eref{S_2l_ind_hyp_H_l} yields
\begin{eqnarray*}
S_{n_K} = S_{2N-2K} = \sum_{j=n_K+1}^{n_K} \gamma_{j}^{(n_K)} Q_{\bm{\alpha} \, , \, \bm{\beta}}^{(j)} 
\end{eqnarray*}
that is clearly $S_{n_K} = 0$.

We now suppose that $\mathcal{N}$ is odd, say of the form $\mathcal{N}=2N+1$, $N \geqslant 0$. In this case we still have $\mathbb{E} \left( \mathcal{N} / 2 \right) - K = N-K$, but $n_K$ now reads $n_K = 2N-2K+1$. Setting again $l=N-K$ into~\eref{S_2l_ind_hyp_H_l} and using the recurrence~\eref{rec_S_k} yields
\begin{eqnarray}
S_{n_K} = S_{2N-2K} + Q_{\bm{\alpha} \, , \, \bm{\beta}}^{(n_K)} = \left( \gamma_{n_K}^{(2N-2K)} + 1 \right) Q_{\bm{\alpha} \, , \, \bm{\beta}}^{(n_K)}
\label{S_n_K_N_odd}
\end{eqnarray}
with, in view of the definition~\eref{Q_j_expr_remind} for $Q^{(j)}$,
\begin{eqnarray}
Q_{\bm{\alpha} \, , \, \bm{\beta}}^{(n_K)} = \sum_{k_1} \sum_{k_{2} > k_{1}} \cdots \sum_{k_{n_K} > k_{n_K-1}} Q_{\bm{\alpha}\beta_{k_1} \cdots \beta_{k_{n_K}}} \left( \bm{0} \right)
\label{Q_n_K_expr}
\end{eqnarray}
where we emphasize that the elements of the sum are statistical cumulants (that indeed do not possess any subscript on the right of the comma). We recall [see~\eref{alpha_def} and~\eref{m+n_K_def}] that $\bm{\alpha}\equiv \alpha_1 \ldots \alpha_{m_K}$ and $m_K+n_K \equiv \mathcal{N}$, and that we consider an odd total number of subscripts, $\mathcal{N}=2N+1$. We can thus use the relation~\eref{rel_m_odd} for $k=N$ to rewrite the cumulant $Q_{\bm{\alpha}\beta_{k_1} \cdots \beta_{k_{n_K}}} \left( \bm{0} \right)$. Since in view of the definition~\eref{prime_indices} we have in this case $m'_N \equiv 2N+1=\mathcal{N}$ and $n'_N \equiv \mathcal{N}-m'_N=0$, it is then clear that the latter cumulants vanish identically,
\begin{eqnarray}
\fl Q_{\bm{\alpha}\beta_{k_1} \cdots \beta_{k_{n_K}},} = Q_{\bm{\alpha}\beta_{k_1} \cdots \beta_{k_{n_K}}} \left( \bm{0} \right) = 0 \qquad\mbox{for odd values of}\ \ \mathcal{N} = m_K+n_K\, .
\label{last_cumulants_vanish}
\end{eqnarray}
Combining~\eref{S_n_K_N_odd}-\eref{Q_n_K_expr} with~\eref{last_cumulants_vanish} readily yields, here again, $S_{n_K} = 0$.

Hence we see that the total sum $S_{n_K}$ vanishes both (i) for $\mathcal{N}$ even, and (ii) for $\mathcal{N}$ odd, so that we indeed have
\begin{eqnarray}
S_{n_K} = 0 
\label{S_n_K_zero}
\end{eqnarray}
for any integer $\mathcal{N} \geqslant 1$, which concludes the proof of theorem~\ref{theorem1}. This shows that, for a fixed $\mathcal{N}=m+n$, the relation obtained from~\eref{gen_rel_response_coef} for an arbitrary even index $m=m_K \equiv 2K$ can be deduced from the set of all relations~\eref{gen_rel_response_coef} for all odd indices $m>m_K$. In other words, the relations that are generated by the FR~\eref{fluct_rel} for a given arbitrary total number $\mathcal{N} = m+n \geqslant 1$ of subscripts are not all independent. Only the relations that are obtained for an odd index $m$ provide non-trivial information about the quantities $Q_{\alpha_1 \cdots \alpha_m \, , \, \beta_1 \cdots \beta_n}$.

The above theorem~\ref{theorem1} hence allows us to unambiguously identify, among all relations \eref{gen_rel_response_coef} generated by the FR~\eref{fluct_rel}, which of them are truly independent. Indeed, we can now establish that these independent relations are precisely the ones obtained from~\eref{gen_rel_response_coef} for odd values of the index $m$. This is a direct consequence of the structure of these relations. To see this, we again consider a given arbitrary total number $\mathcal{N} = m+n$ of subscripts, as well as an arbitrary odd integer $M$ such that $1 \leqslant M \leqslant \mathcal{N}$. The relation~\eref{rel_m_odd} for $m=M$ expresses the coefficient $Q_{\alpha_1 \cdots \alpha_M \, , \, \beta_1 \cdots \beta_{\mathcal{N}-M}}$ in terms of quantities that possess $M+1, \ldots , \mathcal{N}$ subscripts on the left of the comma. This coefficient cannot be expressed from any of the relations~\eref{rel_m_odd} for $m>M$ since it does not appear in the latter relations. Consequently, the relation~\eref{rel_m_odd} for $m=M$ cannot be deduced from the set of relations~\eref{rel_m_odd} for all $m>M$. This result holds for an arbitrary odd integer $M \leqslant \mathcal{N}$. Therefore, this shows that the relations obtained from~\eref{gen_rel_response_coef} for odd values of the index $m$ are all independent, which concludes the proof of the corollary.

We now investigate the consequences of this result on the total number of independent quantities $Q_{\alpha_1 \cdots \alpha_m \, , \, \beta_1 \cdots \beta_n}$ for a given total number $\mathcal{N}$ of subscripts.


\section{Counting of the independent coefficients \texorpdfstring{$Q_{\alpha_1 \cdots \alpha_m \, , \, \beta_1 \cdots \beta_n}$}{Q}}\label{number_ind_coef_sec}

We conclude this paper by counting the number of independent quantities $Q_{\alpha_1 \cdots \alpha_m \, , \, \beta_1 \cdots \beta_n}$ that are required to fully characterize the statistics of the currents in the nonequilibrium steady state. In particular, we use the results obtained in section~\ref{indep_rel_sec} to quantitatively investigate the consequence of the fluctuation relation (FR)~\eref{fluct_rel} on the properties of the full current statistics.

We want to determine the total number $N_{\chi}^{(\mathrm{ind})}(\mathcal{N})$ of independent quantities $Q_{\alpha_1 \cdots \alpha_m \, , \, \beta_1 \cdots \beta_n}$, for a fixed total number $\mathcal{N}=m+n \geqslant 1$ of subscripts, as a consequence of the FR. That is, we count the number of different quantities $Q_{\alpha_1 \cdots \alpha_m \, , \, \beta_1 \cdots \beta_n}$ for all values of the indices $m$ and $n$ that are compatible with the constraint that their sum $\mathcal{N}=m+n$ is fixed. As was already mentioned at the end of section~\ref{coef_sec}, considering a fixed number $\mathcal{N}$ of subscripts is justified by the structure of the general relation~\eref{gen_rel_response_coef}.

As discussed in section~\ref{coef_sec}, the quantities $Q_{\alpha_1 \cdots \alpha_m \, , \, \beta_1 \cdots \beta_n}$ are by construction the coefficients of the expansion~\eref{Q_exp_count_par_and_aff} of the cumulant generating function as a power series of both the counting parameters $\bm{\lambda}$ and the affinities $\bm{A}$. We recall that, in view of Einstein's convention used in~\eref{Q_exp_count_par_and_aff}, the subscripts $\alpha_i$ and $\beta_i$ of the quantity $Q_{\alpha_1 \cdots \alpha_m \, , \, \beta_1 \cdots \beta_n}$ all take values between 1 and $\chi$ ($\chi$ being the total number of currents). The set $\left\{ Q_{\alpha_1 \cdots \alpha_m \, , \, \beta_1 \cdots \beta_n} \right\}$, for all the integers $m,n \geqslant 0$ and for all the values $1 \leqslant \alpha , \beta \leqslant \chi$ of the subscripts, completely specifies the full current statistics in the nonequilibrium steady state. However, it must be noted that the elements of the latter set are not all independent. This rises from (i) the invariance~\eref{Q_inv_perm_indices} and (ii) the general relations~\eref{gen_rel_response_coef} generated by the FR, as we discuss in subsections~\ref{invariance_subsec} and~\ref{csqce_FR_subsec}, respectively. Finally, we illustrate in subsection~\ref{asympt_subsec} the results obtained in subsections~\ref{invariance_subsec} and~\ref{csqce_FR_subsec} by considering the asymptotic behavior of the number $N_{\chi}^{(\mathrm{ind})}(\mathcal{N})$ for $\mathcal{N} \gg 1$.


\subsection{Consequence of the invariance \texorpdfstring{\eref{Q_inv_perm_indices}}{}}\label{invariance_subsec}

We first consider fixed indices $m$ and $n$. \textit{A priori}, we have a total number $\chi^{m+n}$ of different quantities $Q_{\alpha_1 \cdots \alpha_m \, , \, \beta_1 \cdots \beta_n}$. However, this number is drastically reduced by the invariance property~\eref{Q_inv_perm_indices} under any permutation of the subscripts on the left or on the right of the comma. Indeed, the independent quantities $Q_{\alpha_1 \cdots \alpha_m \, , \, \beta_1 \cdots \beta_n}$ can then only be those that have different numbers of subscripts $\alpha$ or $\beta$ that are equal to $1, \ldots , \chi$.

We now count the total number $N_{\chi}^{(m,n)}$ of independent quantities $Q_{\alpha_1 \cdots \alpha_m \, , \, \beta_1 \cdots \beta_n}$ for given $m,n \geqslant 1$. To this end, we note here that $\mathscr{S}_{\chi}' \equiv \{ 1, \ldots,\chi \}$ defines a set of $\chi$ elements. The problem of finding $N_{\chi}^{(m,n)}$ is hence equivalent to determining the total numbers of \textit{unordered} collections $\{ \alpha_1,\ldots,\alpha_m \}$ and $\{ \beta_1,\ldots,\beta_n \}$ of $m$ and $n$ \textit{not necessarily distinct} elements of $\mathscr{S}_{\chi}'$, respectively. The latter numbers are known to be given respectively by $\binom{m+\chi-1}{\chi-1}$ and $\binom{n+\chi-1}{\chi-1}$ (see e.g. theorem~4.2 in~\cite{Rys}). The number $N_{\chi}^{(m,n)}$ is thus given by the product
\begin{eqnarray}
N_{\chi}^{(m,n)} = \binom{m+\chi-1}{\chi-1} \binom{n+\chi-1}{\chi-1} .
\label{N_chi_m_n}
\end{eqnarray}
Note that the above form of $N_{\chi}^{(m,n)}$ is also valid for $m=0$ or $n=0$. Therefore, we see that the invariance~\eref{Q_inv_perm_indices} reduces the number of different quantities $Q_{\alpha_1 \cdots \alpha_m \, , \, \beta_1 \cdots \beta_n}$, for given $m,n \geqslant 0$, from $\chi^{m+n}$ to~\eref{N_chi_m_n}.

We now denote by $N_{\chi}(\mathcal{N})$ the total number of different quantities $Q_{\alpha_1 \cdots \alpha_m \, , \, \beta_1 \cdots \beta_n}$, for a fixed total number $\mathcal{N}=m+n$ of subscripts and all compatible values of $m$ and $n$, with the invariance~\eref{Q_inv_perm_indices} taken into account. We recall that $Q_{\, , \, \beta_1 \cdots \beta_n} = 0$ for any $n \geqslant 0$ as a consequence of the normalization condition~\eref{cumulant_GF_norm_cond}. This means that the quantities $Q$ with $m=0$ are perfectly determined and hence do not need to enter our counting of independent quantities. With this in mind, the number $N_{\chi}(\mathcal{N})$, for any $\mathcal{N} \geqslant 1$, is thus given by, in view of~\eref{N_chi_m_n},
\begin{eqnarray}
\fl N_{\chi}(\mathcal{N}) = \sum_{m=1}^{\mathcal{N}} \, \sum_{n=0 \atop n+m=\mathcal{N}}^{\mathcal{N}-1} N_{\chi}^{(m,n)} = \sum_{m=1}^{\mathcal{N}} \, \sum_{n=0 \atop n+m=\mathcal{N}}^{\mathcal{N}-1} \binom{m+\chi-1}{\chi-1} \binom{n+\chi-1}{\chi-1} .
\label{nb_diff_Q_without_FR}
\end{eqnarray}

We emphasize that the quantity $N_{\chi}(\mathcal{N})$ only embeds the invariance~\eref{Q_inv_perm_indices}, and hence does not contain any information about the FR~\eref{fluct_rel}. We now discuss how the latter influences the total number of independent $Q_{\alpha_1 \cdots \alpha_m \, , \, \beta_1 \cdots \beta_n}$.


\subsection{Consequence of the fluctuation relation}\label{csqce_FR_subsec}

In section~\ref{coef_sec}, we saw that the FR~\eref{fluct_rel} generates general relations of the form~\eref{gen_rel_response_coef} between the quantities $Q_{\alpha_1 \cdots \alpha_m \, , \, \beta_1 \cdots \beta_n}$, for any $m,n \geqslant 0$. Pursuing the above analysis that led to~\eref{nb_diff_Q_without_FR}, we consider the relations that are obtained from~\eref{gen_rel_response_coef} for a given arbitrary total number $\mathcal{N}=m+n \geqslant 1$ of subscripts and for all the values of the indices $m$ and $n$ compatible with the constraint that their sum $\mathcal{N}$ is fixed. At first sight, the FR generates $\mathcal{N} + 1$ different types of relations: one for $m=\mathcal{N}$ and $n=0$, one for $m=\mathcal{N}-1$ and $n=1$, \ldots , and one for $m=0$ and $n=\mathcal{N}$. However, these $\mathcal{N} + 1$ relations are not all independent, as we discussed in details in the previous section.

Indeed, we showed in section~\ref{indep_rel_sec} that any relation obtained from~\eref{gen_rel_response_coef} with an arbitrary even index $m=m_K=2K$ can be deduced from the set of all the relations~\eref{gen_rel_response_coef} for every odd index $m>m_K$ (see theorem~\ref{theorem1}). This concerns the relations corresponding to the values $m=0$, $m=2$, \ldots , $m = 2 \mathbb{E} \left( \mathcal{N} / 2 \right)$. Hence we see that, among the $\mathcal{N} + 1$ relations \textit{a priori} obtained from~\eref{gen_rel_response_coef} for a given $\mathcal{N}$, $\mathbb{E} \left( \mathcal{N} / 2 \right) + 1$ of them can be derived from the remaining ones. Therefore, our analysis shows that the number of independent relations between the quantities $Q_{\alpha_1 \cdots \alpha_m \, , \, \beta_1 \cdots \beta_n}$ that are generated by the FR~\eref{fluct_rel} is only $\mathbb{E} \left[ \left( \mathcal{N}+1 \right) / 2 \right]$. These independent relations are the ones obtained from~\eref{gen_rel_response_coef} for odd values of the index $m$, i.e. for $m=1$, $m=3$, \ldots , and $m = 2 \mathbb{E} \left[ \left( \mathcal{N}+1 \right) / 2 \right] - 1$.

A single one of these $\mathbb{E} \left[ \left( \mathcal{N}+1 \right) / 2 \right]$ independent relations corresponds to a particular odd value of $m$, and expresses the coefficient $Q_{\alpha_1 \cdots \alpha_m \, , \, \beta_1 \cdots \beta_n}$ in terms of quantities $Q_{\alpha_1 \cdots \alpha_{m'} \, , \, \beta_1 \cdots \beta_{n'}}$ that have $m'=m+1$, \ldots , $m+n$ subscripts on the left of the comma. Since the subscripts $\alpha_i$ and $\beta_j$ take values between 1 and $\chi$, there is one such relation for all possible values of $\alpha_i$ and $\beta_j$. We already saw in section~\ref{invariance_subsec} that there are $N_{\chi}^{(m,n)}$ different $Q_{\alpha_1 \cdots \alpha_m \, , \, \beta_1 \cdots \beta_n}$ for fixed indices $m$ and $n$, as a consequence of the invariance~\eref{Q_inv_perm_indices}. Therefore, one relation obtained from~\eref{gen_rel_response_coef} for a given odd index $m$ actually yields $N_{\chi}^{(m,n)}$ different relations. Denoting by $N_{\chi}^{(\mathrm{FR})}(\mathcal{N})$ the total number, for a fixed number $\mathcal{N} \geqslant 1$ of subscripts, of independent relations between quantities $Q_{\alpha_1 \cdots \alpha_m \, , \, \beta_1 \cdots \beta_n}$ generated by the FR~\eref{fluct_rel}, we hence have, in view of the expression~\eref{N_chi_m_n},
\begin{eqnarray}
\fl N_{\chi}^{(\mathrm{FR})}(\mathcal{N}) = \sum_{m=1 \atop m \, \mathrm{odd}}^{\mathcal{N}} \; \sum_{n=0 \atop n+m=\mathcal{N}}^{\mathcal{N}-1} N_{\chi}^{(m,n)} = \sum_{m=1 \atop m \, \mathrm{odd}}^{\mathcal{N}} \; \sum_{n=0 \atop n+m=\mathcal{N}}^{\mathcal{N}-1} \binom{m+\chi-1}{\chi-1} \binom{n+\chi-1}{\chi-1} .
\label{nb_diff_rel_from_FR}
\end{eqnarray}

Therefore, we first determined in subsection~\ref{invariance_subsec} the number $N_{\chi}(\mathcal{N})$ of different quantities $Q_{\alpha_1 \cdots \alpha_m \, , \, \beta_1 \cdots \beta_n}$ for a fixed total number $\mathcal{N}$ of subscripts. This number was deduced from the invariance~\eref{Q_inv_perm_indices}, and thus represents the total number of quantities $Q_{\alpha_1 \cdots \alpha_m \, , \, \beta_1 \cdots \beta_n}$ that need to be known independently of the FR~\eref{fluct_rel}. Then we found the number $N_{\chi}^{(\mathrm{FR})}(\mathcal{N})$, for a fixed $\mathcal{N}$ as well, of independent relations generated by the FR. We are now in position to evaluate the total number $N_{\chi}^{(\mathrm{ind})}(\mathcal{N})$ of independent quantities $Q_{\alpha_1 \cdots \alpha_m \, , \, \beta_1 \cdots \beta_n}$, for a given $\mathcal{N} \geqslant 1$, as a consequence of the FR. It must be merely given by the difference
\begin{eqnarray*}
N_{\chi}^{(\mathrm{ind})}(\mathcal{N}) = N_{\chi}(\mathcal{N}) - N_{\chi}^{(\mathrm{FR})}(\mathcal{N})
\end{eqnarray*}
that is, in view of the expressions~\eref{nb_diff_Q_without_FR} and~\eref{nb_diff_rel_from_FR} of $N_{\chi}(\mathcal{N})$ and $N_{\chi}^{(\mathrm{FR})}(\mathcal{N})$, respectively,
\begin{eqnarray}
N_{\chi}^{(\mathrm{ind})}(\mathcal{N}) = \sum_{m=1 \atop m \, \mathrm{even}}^{\mathcal{N}} \; \sum_{n=0 \atop n+m=\mathcal{N}}^{\mathcal{N}-1} \binom{m+\chi-1}{\chi-1} \binom{n+\chi-1}{\chi-1} .
\label{nb_diff_Q_with_FR}
\end{eqnarray}
This quantity describes the actual number of independent quantities $Q_{\alpha_1 \cdots \alpha_m \, , \, \beta_1 \cdots \beta_n}$ that, in view of the underlying FR~\eref{fluct_rel}, yield the full information regarding the underlying statistics of the currents in the nonequilibrium steady state. These independent quantities correspond to those that remain undetermined by the FR, and that need to be specified (either theoretically or experimentally) in any particular physical problem.

The expressions~\eref{N_chi_m_n}-\eref{nb_diff_Q_with_FR} are valid for any $\mathcal{N} \geqslant 1$ and any $\chi \geqslant 1$. We now illustrate these results by considering their asymptotic behavior for $\mathcal{N} \gg 1$.


\subsection{Asymptotic behavior for \texorpdfstring{$\mathcal{N} \gg 1$}{N large}}\label{asympt_subsec}

Here we apply the results of subsections~\ref{invariance_subsec} and~\ref{csqce_FR_subsec} regarding the total number of independent quantities $Q_{\alpha_1 \cdots \alpha_m \, , \, \beta_1 \cdots \beta_n}$ to the case of a large total number $\mathcal{N}=m+n$ of subscripts, i.e. $\mathcal{N} \gg 1$. We consider an arbitrary total number $\chi$ of currents, $\chi \geqslant 1$.

We first investigate the smallest values of $\chi$, as depicted in figure~\ref{fig1}. Figure~\ref{fig1}(a) shows the behavior of the total number $N_{\chi}$ of different quantities $Q_{\alpha_1 \cdots \alpha_m \, , \, \beta_1 \cdots \beta_n}$ with the invariance~\eref{Q_inv_perm_indices} and the total number $N_{\chi}^{(\mathrm{ind})}$ of independent $Q_{\alpha_1 \cdots \alpha_m \, , \, \beta_1 \cdots \beta_n}$ with the FR~\eref{fluct_rel} taken into account, both as functions of $\mathcal{N}$ for different values of $\chi$. Figure~\ref{fig1}(b) then shows the behavior of the ratio $N_{\chi}^{(\mathrm{ind})}/N_{\chi}$ as a function of $\mathcal{N}$ for the same values of $\chi$. We readily see that, independently of the value of $\chi$, this ratio tends to the value 1/2 for large values of $\mathcal{N}$.

\begin{figure*}[ht]
\centering
\includegraphics[width=5.5in]{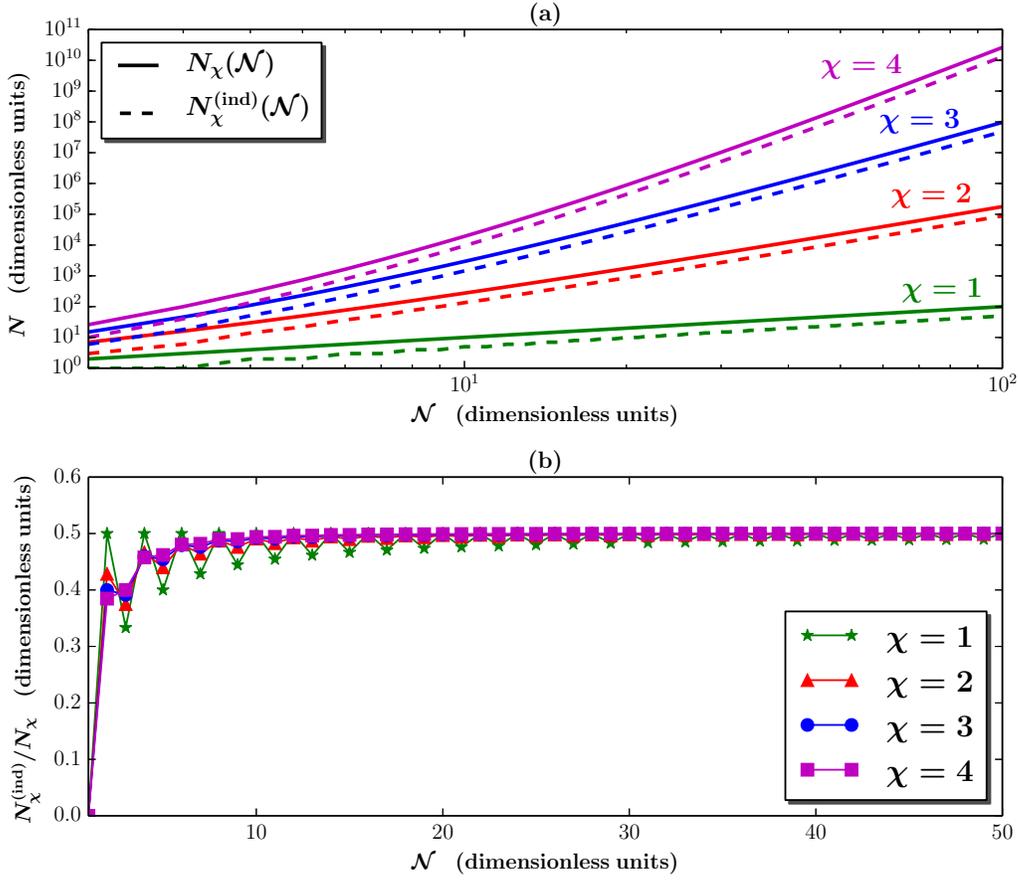}
\caption{(Color online) (a) Behavior of the numbers \eref{nb_diff_Q_without_FR} (solid curves) and \eref{nb_diff_Q_with_FR} (dotted curves) as functions of $\mathcal{N}$ for different values of $\chi$. (b) Behavior of the ratio $N_{\chi}^{(\mathrm{ind})}/N_{\chi}$ as a function of $\mathcal{N}$ for the same values of $\chi$.}
\label{fig1}
\end{figure*}

\begin{table}[ht]
\centering
\caption{Asymptotic behavior of the numbers \eref{nb_diff_Q_without_FR} and \eref{nb_diff_Q_with_FR}  for $\mathcal{N}\gg1$ for the same values of $\chi$ as in figure~\ref{fig1}.}
\vskip 0.2 cm
\begin{tabular}{ |M{3cm}|M{3cm}|M{3cm}|N  }
 \hline
 $\chi$ & $N_{\chi} (\mathcal{N} \gg 1)$ & $N_{\chi}^{(\mathrm{ind})} (\mathcal{N} \gg 1)$ & \\[0.5cm]
 \hline\hline
 1   & $\mathcal{N}$  & $\mathcal{N}/2$ & \\[0.2cm]
 2   & $\mathcal{N}^3/6$  & $\mathcal{N}^3/12$ & \\[0.2cm]
 3   & $\mathcal{N}^5/30$  & $\mathcal{N}^5/60$ & \\[0.2cm]
 4   & $\mathcal{N}^7/5040$  & $\mathcal{N}^7/10080$ & \\[0.2cm]
$\vdots$   & $\vdots$  & $\vdots$ & \\[0.2cm]
 \hline
\end{tabular}
\label{table1}
\end{table}

This behavior of the ratio $N_{\chi}^{(\mathrm{ind})}/N_{\chi}$ for large values of $\mathcal{N}$ can be calculated using the expressions~\eref{nb_diff_Q_without_FR} and~\eref{nb_diff_Q_with_FR} of $N_{\chi}$ and $N_{\chi}^{(\mathrm{ind})}$, respectively, giving polynomial functions of $\mathcal{N}$. Keeping the highest power of $\mathcal{N}$, the asymptotic behavior of $N_{\chi}$ and $N_{\chi}^{(\mathrm{ind})}$ is obtained in the limit $\mathcal{N} \gg 1$, as summarized in table~\ref{table1} for the same values of $\chi$ as in figure~\ref{fig1}. This confirms the observation drawn from figure~\ref{fig1}(b) that the ratio $N_{\chi}^{(\mathrm{ind})}/N_{\chi}$ indeed tends to 1/2 in the limit $\mathcal{N} \gg 1$ for $\chi = 1, \ldots , 4$.

The foregoing analysis suggests that the ratio $N_{\chi}^{(\mathrm{ind})}/N_{\chi}$ tends, in the limit $\mathcal{N} \gg 1$, to the value 1/2 for an arbitrary value of $\chi$. This is precisely the content of the following theorem, which we prove in the remaining part of this subsection:

\begin{theorem}
The ratio of the quantities $N_{\chi}^{(\mathrm{ind})}$ and $N_{\chi}$, respectively given by~\eref{nb_diff_Q_with_FR} and~\eref{nb_diff_Q_without_FR}, satisfies the asymptotic behavior
\begin{eqnarray}
\frac{N_{\chi}^{(\mathrm{ind})}(\mathcal{N})}{N_{\chi}(\mathcal{N})} \approx \frac{1}{2} \qquad\mbox{for}\qquad \mathcal{N} \gg 1 \, 
\label{ratio_asympt}
\end{eqnarray}
and an arbitrary value of $\chi$.
\label{theorem2}
\end{theorem}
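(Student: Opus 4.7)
The plan is to convert both sums into coefficients of simple rational generating functions, extract closed-form expressions, and then compare leading orders in $\mathcal{N}$.

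Introduce the generating function
\begin{equation*}
A(x) \equiv \sum_{m=0}^{\infty}\binom{m+\chi-1}{\chi-1}\,x^{m}=\frac{1}{(1-x)^{\chi}} \, ,
\end{equation*}
so that the unconstrained convolution $\sum_{m+n=\mathcal{N}}\binom{m+\chi-1}{\chi-1}\binom{n+\chi-1}{\chi-1}$ is $[x^{\mathcal{N}}]A(x)^{2}=\binom{\mathcal{N}+2\chi-1}{2\chi-1}$. Subtracting the $m=0$ term (which equals $\binom{\mathcal{N}+\chi-1}{\chi-1}$) yields the closed form
\begin{equation*}
N_{\chi}(\mathcal{N})=\binom{\mathcal{N}+2\chi-1}{2\chi-1}-\binom{\mathcal{N}+\chi-1}{\chi-1}\, .
\end{equation*}

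Next, restrict the first factor to even indices by using the standard projection $A_{e}(x)\equiv\tfrac12[A(x)+A(-x)]$. Observing that $A(x)A(-x)=(1-x^{2})^{-\chi}$ has only even powers of $x$, I would write
\begin{equation*}
N_{\chi}^{(\mathrm{ind})}(\mathcal{N})=[x^{\mathcal{N}}]\bigl(A_{e}(x)-1\bigr)A(x)=\frac{1}{2}\binom{\mathcal{N}+2\chi-1}{2\chi-1}+\frac{1}{2}[x^{\mathcal{N}}](1-x^{2})^{-\chi}-\binom{\mathcal{N}+\chi-1}{\chi-1}\, ,
\end{equation*}
where the middle term equals $\tfrac12\binom{\mathcal{N}/2+\chi-1}{\chi-1}$ when $\mathcal{N}$ is even and vanishes when $\mathcal{N}$ is odd. (Equivalently, one could obtain the same formula by first computing $N_{\chi}^{(\mathrm{FR})}$ from $[x^{\mathcal{N}}](A(x)-A_{e}(x))A(x)$ and subtracting it from $N_{\chi}(\mathcal{N})$.)

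Finally, I would extract the leading asymptotics. Since $\binom{\mathcal{N}+2\chi-1}{2\chi-1}$ is a polynomial in $\mathcal{N}$ of degree $2\chi-1$ with leading coefficient $1/(2\chi-1)!$, whereas both $\binom{\mathcal{N}+\chi-1}{\chi-1}$ and $\binom{\mathcal{N}/2+\chi-1}{\chi-1}$ are polynomials of degree $\chi-1$, the inequality $2\chi-1>\chi-1$ (valid for every $\chi\geqslant 1$) implies that the leading terms are
\begin{equation*}
N_{\chi}(\mathcal{N})\sim\frac{\mathcal{N}^{2\chi-1}}{(2\chi-1)!}\, ,\qquad N_{\chi}^{(\mathrm{ind})}(\mathcal{N})\sim\frac{1}{2}\,\frac{\mathcal{N}^{2\chi-1}}{(2\chi-1)!}\, ,
\end{equation*}
from which \eref{ratio_asympt} follows immediately, uniformly in the parity of $\mathcal{N}$.

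I do not expect genuine obstacles here: once the generating function identity $A(x)A(-x)=(1-x^{2})^{-\chi}$ is noted, the problem reduces to comparing a polynomial of degree $2\chi-1$ with corrections of degree $\chi-1$. The only minor nuisance is bookkeeping of the $m=0$ exclusion and the even/odd parity of $\mathcal{N}$ in the middle term, both of which are manifestly subleading and therefore irrelevant to the ratio.
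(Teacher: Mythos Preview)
Your proposal is correct and takes a genuinely different route from the paper. The paper rewrites both sums in terms of a polynomial $f_{\chi}(x)=\prod_{k=1}^{\chi-1}(x+k)(\mathcal{N}-x+k)$, applies the Euler--Maclaurin summation formula to approximate each sum by an integral (Lemma~\ref{lemma_asympt1}), and then shows via a change of variable that the integral for $N_{\chi}^{(\mathrm{ind})}$ is asymptotically one half of the integral for $N_{\chi}$ (Lemma~\ref{lemma_asympt2}); along the way it must bound boundary terms and derivatives $f_{\chi}^{(2j-1)}$ to control the Euler--Maclaurin remainder. Your generating-function argument bypasses all of this analysis: the Vandermonde-type identity $[x^{\mathcal{N}}]A(x)^{2}=\binom{\mathcal{N}+2\chi-1}{2\chi-1}$ and the observation $A(x)A(-x)=(1-x^{2})^{-\chi}$ give \emph{exact} closed forms for both $N_{\chi}$ and $N_{\chi}^{(\mathrm{ind})}$, after which the asymptotics reduce to the trivial degree comparison $2\chi-1>\chi-1$. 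Your approach is shorter, more elementary, and in fact yields more---namely explicit finite-$\mathcal{N}$ formulas rather than only leading-order behavior. The paper's method has the mild advantage of not requiring any combinatorial identity beyond the definition, but at the cost of two lemmas and several boundary estimates.
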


The main idea that underlies the proof of this theorem is to note that the defining sums~\eref{nb_diff_Q_without_FR} and~\eref{nb_diff_Q_with_FR} for the numbers $N_{\chi}$ and $N_{\chi}^{(\mathrm{ind})}$, respectively, can be approximated by integrals in the limit $\mathcal{N} \gg 1$. To this end, it proves useful to rewrite~\eref{nb_diff_Q_without_FR} and~\eref{nb_diff_Q_with_FR} in the form
\begin{eqnarray}
N_{\chi}(\mathcal{N}) = \frac{1}{\left[ (\chi-1)! \right]^2} \, \sum_{m=1}^{\mathcal{N}} f_{\chi} (m)
\label{N_chi_expr}
\end{eqnarray}
and
\begin{eqnarray}
N_{\chi}^{(\mathrm{ind})}(\mathcal{N}) = \frac{1}{\left[ (\chi-1)! \right]^2} \sum_{m=1}^{\mathbb{E} \left( \frac{\mathcal{N}}{2} \right)} f_{\chi} (2m) 
\label{N_chi_ind_expr}
\end{eqnarray}
where $f_{\chi}$ is the polynomial function defined by
\begin{eqnarray}
f_{\chi}(x) \equiv \prod_{k=1}^{\chi-1} (x+k) (\mathcal{N}-x+k) 
\label{f_chi_def}
\end{eqnarray}
for any $\chi \geqslant 2$.  The fact that this function is here defined for $\chi \geqslant 2$ is not a problem regarding the subsequent proof, since theorem~\ref{theorem2} has already been proved for $\chi=1$ in table~\ref{table1}. From now on, we consider an arbitrary but fixed number $\chi$ of currents.

It is worth pointing out that we can rewrite~\eref{f_chi_def} as 
\begin{eqnarray}
f_{\chi}(x) = \prod_{k=1}^{\chi-1} \left[ \left( \frac{\mathcal{N}}{2} + k \right)^2 - \left(\frac{\mathcal{N}}{2}-x \right)^2 \right] .
\label{f_chi_sym_form}
\end{eqnarray}
This readily shows that the function $f_{\chi}(x)$ is symmetric under the transformation $x\to\mathcal{N}-x$ and is maximal at the point $x=\mathcal{N}/2$. Therefore, the structure of this function ensures that, for $\mathcal{N} \gg 1$, the main contributions to the two sums involved in~\eref{N_chi_expr} and~\eref{N_chi_ind_expr} arise from values of $m$ and $2m$, respectively, in the neighborhood of $\mathcal{N}/2$ where
\begin{eqnarray}
f_{\chi} \left( \mathcal{N}/2 \right) = \left[ \prod_{k=1}^{\chi-1} \left( \frac{\mathcal{N}}{2}+k \right) \right]^2 \sim \mathcal{N}^{2\chi-2}
\label{f_chi_N_ov_2_expr}
\end{eqnarray}
for a fixed value of $\chi$ and $\mathcal{N} \gg 1$. In contrast, the values near the boundaries $1$ and $\mathcal{N}$ do not significantly contribute to the sums because
\begin{eqnarray}
f_{\chi}(1) \sim \mathcal{N}^{\chi-1} \qquad\mbox{and}\qquad
f_{\chi}(\mathcal{N}) \sim \mathcal{N}^{\chi-1} \, .
\label{f_chi_1+N_expr}
\end{eqnarray}
It is then clear on~\eref{f_chi_N_ov_2_expr}-\eref{f_chi_1+N_expr} that $f_{\chi}(\mathcal{N}/2) \gg f_{\chi}(1)$ and $f_{\chi}(\mathcal{N}/2) \gg f_{\chi}(\mathcal{N})$ for a fixed $\chi$ and $\mathcal{N} \gg 1$.

Now, we discuss how the discrete sums in~\eref{N_chi_expr}-\eref{N_chi_ind_expr} can be approximated by integrals for $\mathcal{N} \gg 1$.  We begin with the following:

\begin{lemma}
The numbers $N_{\chi}$ and $N_{\chi}^{(\mathrm{ind})}$, respectively given by~\eref{N_chi_expr} and~\eref{N_chi_ind_expr}, can be approximated by the integrals
\begin{eqnarray}
N_{\chi}(\mathcal{N}) \approx \frac{1}{\left[ (\chi-1)! \right]^2} \int_{1}^{\mathcal{N}} f_{\chi}(x) \, dx
\label{N_chi_int_expr}
\end{eqnarray}
and
\begin{eqnarray}
N_{\chi}^{(\mathrm{ind})}(\mathcal{N}) \approx \frac{1}{\left[ (\chi-1)! \right]^2} \int_{1}^{\mathbb{E} \left( \frac{\mathcal{N}}{2} \right)} f_{\chi}(2x) \, dx
\label{N_chi_ind_int_expr}
\end{eqnarray}
for an arbitrary value of $\chi$ and $\mathcal{N} \gg 1$.
\label{lemma_asympt1}
\end{lemma}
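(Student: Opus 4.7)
The plan is to compare each discrete sum with its Riemann-integral counterpart monomial by monomial, using the Faulhaber summation identity, and then to verify that the resulting discrepancy is one order in $\mathcal{N}$ below the integral itself. Since $f_\chi$ is a polynomial of degree $2\chi-2$ in $x$ whose coefficients depend on $\mathcal{N}$, the lemma reduces to the well-known $O(\mathcal{N}^{-1})$ convergence rate of a Riemann sum to its integral for a smooth integrand.

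For the first formula, I would expand the polynomial as $f_\chi(x)=\sum_{j=0}^{2\chi-2}a_j(\mathcal{N})\,x^{j}$. From the factorized form $f_\chi(x)=\prod_{k=1}^{\chi-1}(x+k)(\mathcal{N}-x+k)$, each coefficient satisfies $a_j(\mathcal{N})=O(\mathcal{N}^{2\chi-2-j})$ as $\mathcal{N}\to\infty$, since $f_\chi$ is bi-homogeneous of total degree $2\chi-2$ in $(x,\mathcal{N})$ up to the bounded shifts $k\leqslant\chi-1$. Combining this with Faulhaber's identity $\sum_{m=1}^{\mathcal{N}}m^{j}=\mathcal{N}^{j+1}/(j+1)+O(\mathcal{N}^{j})$ and the elementary $\int_{1}^{\mathcal{N}}x^{j}\,dx=\mathcal{N}^{j+1}/(j+1)+O(1)$ gives
\[
\sum_{m=1}^{\mathcal{N}}f_\chi(m)-\int_{1}^{\mathcal{N}}f_\chi(x)\,dx=\sum_{j=0}^{2\chi-2}a_j(\mathcal{N})\,O(\mathcal{N}^{j})=O(\mathcal{N}^{2\chi-2}).
\]
On the other hand, the symmetric form~\eref{f_chi_sym_form} combined with the change of variable $x=\mathcal{N}u$ yields $\int_{1}^{\mathcal{N}}f_\chi(x)\,dx=\Theta(\mathcal{N}^{2\chi-1})$, so the relative error is $O(\mathcal{N}^{-1})\to 0$, establishing~\eref{N_chi_int_expr}.

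For the second formula, I would apply exactly the same argument to the auxiliary polynomial $\tilde f_\chi(x)\equiv f_\chi(2x)$, which still has degree $2\chi-2$, and to the sum $\sum_{m=1}^{\mathbb{E}(\mathcal{N}/2)}\tilde f_\chi(m)$. Replacing the upper bound $\mathcal{N}/2$ by its integer part $\mathbb{E}(\mathcal{N}/2)$ contributes at most one additional boundary term bounded by $f_\chi(\mathcal{N})=O(\mathcal{N}^{\chi-1})$, which remains negligible with respect to $\Theta(\mathcal{N}^{2\chi-1})$. The main obstacle I anticipate is purely notational bookkeeping: one must verify uniformly in $j$ that the coefficient scaling $a_j(\mathcal{N})=O(\mathcal{N}^{2\chi-2-j})$ holds, which follows from the fact that each factor in the defining product of $f_\chi$ is linear in $x$ and the index shifts $k$ are bounded independently of $\mathcal{N}$. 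Once this uniform bound is in place, both approximations reduce to the same integral-versus-error scaling estimate.
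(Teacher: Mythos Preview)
Your argument is correct and rests on the same underlying idea as the paper---approximating a polynomial Riemann sum by its integral with an $O(\mathcal{N}^{-1})$ relative error---but the bookkeeping is organised differently. The paper applies the Euler--Maclaurin formula directly to $f_\chi$ (choosing $p=\chi-1$ so that the integral remainder term vanishes because $f_\chi$ has degree $2\chi-2$), then bounds the boundary corrections $f_\chi(1),f_\chi(\mathcal{N})$ and their derivatives by $O(\mathcal{N}^{\chi-1})$, and finally uses the crude pointwise bound $f_\chi(x)\geqslant x^{\chi-1}$ to show the integral is at least of order $\mathcal{N}^{\chi}$. You instead expand monomial by monomial and invoke Faulhaber's identity, obtaining a looser error estimate $O(\mathcal{N}^{2\chi-2})$ but compensating with a sharper integral estimate $\Theta(\mathcal{N}^{2\chi-1})$ via the rescaling $x=\mathcal{N}u$. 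Both routes yield the same relative error $O(\mathcal{N}^{-1})$; yours is slightly more elementary (no Bernoulli numbers) and gives the leading order of the integral explicitly, while the paper's avoids tracking the coefficient scaling $a_j(\mathcal{N})=O(\mathcal{N}^{2\chi-2-j})$ at the cost of invoking the full Euler--Maclaurin machinery.
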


The proof below makes use of the well-known Euler summation formula \cite{Apo99} (also referred to as the Euler-Maclaurin formula \cite{AbrSteg}),
\begin{eqnarray}
\fl \sum_{j=1}^{k} f(j) = \int_{1}^{k} f(x) \, dx + \frac{1}{2} [f(1)+f(k)] + \sum_{j=1}^{p} \frac{B_{2j}}{(2j)!} \left[ f^{(2j-1)}(k) - f^{(2j-1)}(1) \right] \nonumber\\[0.1cm]
+ \frac{1}{(2p+1)!} \int_{1}^{k} B_{2p+1}\left( x-\mathbb{E}(x) \right) f^{(2p+1)}(x) \, dx
\label{Euler_formula}
\end{eqnarray}
where $k$ and $p$ are two integers such that $k,p \geqslant 1$, $f(x)$ is an arbitrary function that admits $2p+1$ continuous derivatives and $f^{(j)}$ is the $j^{\mathrm{th}}$ derivative of $f$. Here $B_j$ and $B_j(x)$ denote the $j^{\mathrm{th}}$ Bernoulli number and polynomial, respectively, while $B_{j}\left( x-\mathbb{E}(x) \right)$ is referred to as the $j^{\mathrm{th}}$ Bernoulli periodic function \cite{AbrSteg}. We emphasize that the formula~\eref{Euler_formula} is \textit{exact}, and that the three last terms in the right-hand side provide the exact error made when estimating the discrete sum by the integral.

\begin{proof}
For clarity, we begin with the proof of the approximation~\eref{N_chi_int_expr} of $N_{\chi}$. We apply the Euler formula~\eref{Euler_formula} to the function $f_{\chi}$, which is a polynomial function of degree $2\chi-2$, as readily seen on its definition~\eref{f_chi_def}. We can thus choose $p=\chi-1$ into the Euler formula so that the last integral in the right-hand side of~\eref{Euler_formula} is equal to zero since $f_{\chi}^{(2\chi-1)}$ vanishes identically. Setting $k=\mathcal{N}$ into~\eref{Euler_formula} hence yields
\begin{eqnarray}
\fl \sum_{j=1}^{\mathcal{N}} f_{\chi}(j) = \int_{1}^{\mathcal{N}} f_{\chi}(x) \, dx + \frac{1}{2}[f_{\chi}(1)+f_{\chi}(\mathcal{N})] + \sum_{j=1}^{\chi-1} \frac{B_{2j}}{(2j)!} \left[ f_{\chi}^{(2j-1)}(\mathcal{N}) - f_{\chi}^{(2j-1)}(1) \right] . \nonumber\\
\label{Euler_formula_f_chi}
\end{eqnarray}
Now, we show that the second and third terms in the right-hand side of~\eref{Euler_formula_f_chi} can be neglected, in the limit $\mathcal{N} \gg 1$, compared to the integral. To this end, it proves useful to find a relevant lower bound for the integral.

Since $x+k \geqslant x$ and $\mathcal{N}-x+k \geqslant 1$ hold for any $1 \leqslant x \leqslant \mathcal{N}$ and $1 \leqslant k \leqslant \chi-1$, we see that the function~\eref{f_chi_def} satisfies $f_{\chi}(x) \geqslant x^{\chi-1}$ for any $1 \leqslant x \leqslant \mathcal{N}$.  Consequently, we get the inequality
\begin{eqnarray}
\int_{1}^{\mathcal{N}} f_{\chi}(x) \, dx \geqslant \int_{1}^{\mathcal{N}} x^{\chi-1} \, dx \sim \mathcal{N}^{\chi}
\label{int_1_to_N_f_chi_ineq}
\end{eqnarray}
for $\mathcal{N} \gg 1$, so that the integral $\int_{1}^{\mathcal{N}} f_{\chi}(x) \, dx$ behaves \textit{at least} as $\mathcal{N}^{\chi}$. Since we already know from~\eref{f_chi_1+N_expr} that $f_{\chi}(1),f_{\chi}(\mathcal{N}) \sim \mathcal{N}^{\chi-1}$, we readily see that
\begin{eqnarray}
f_{\chi}(1), f_{\chi}(\mathcal{N}) \ll \int_{1}^{\mathcal{N}} f_{\chi}(x) \, dx 
\label{f_chi_1_N_neglig}
\end{eqnarray}
for $\mathcal{N} \gg 1$. This shows that the term $[f_{\chi}(1)+f_{\chi}(\mathcal{N})]/2$ in the right-hand side of~\eref{Euler_formula_f_chi} can be neglected for $\mathcal{N} \gg 1$, compared to the integral.

Now, we use the exact same line of reasoning to neglect the third term in the right-hand side of~\eref{Euler_formula_f_chi}. Similarly to~\eref{f_chi_1+N_expr}, it can be readily seen that the derivatives $f_{\chi}^{(2j-1)}(1)$ and $f_{\chi}^{(2j-1)}(\mathcal{N})$ behave \textit{at most} as $\mathcal{N}^{\chi-1}$ for any $1 \leqslant j \leqslant \chi-1$. It is then clear from~\eref{int_1_to_N_f_chi_ineq} that for $\mathcal{N} \gg 1$ we have
\begin{eqnarray}
f_{\chi}^{(2j-1)}(1), f_{\chi}^{(2j-1)}(\mathcal{N}) \ll \int_{1}^{\mathcal{N}} f_{\chi}(x) \, dx
\label{f_chi_deriv_neglig}
\end{eqnarray}
for any $1 \leqslant j \leqslant \chi-1$. This shows that the third term in the right-hand side of~\eref{Euler_formula_f_chi} can also be neglected, for $\mathcal{N} \gg 1$, compared to the integral so that~\eref{Euler_formula_f_chi} merely reads
\begin{eqnarray}
\sum_{j=1}^{\mathcal{N}} f_{\chi}(j) \approx \int_{1}^{\mathcal{N}} f_{\chi}(x) \, dx \, .
\label{Euler_formula_f_chi_final}
\end{eqnarray}
Combining~\eref{Euler_formula_f_chi_final} with the expression~\eref{N_chi_expr} of $N_{\chi}$ hence yields the desired result~\eref{N_chi_int_expr}.

The proof of the approximation~\eref{N_chi_ind_int_expr} of $N_{\chi}^{(\mathrm{ind})}$ proceeds in the same way by distinguishing the two cases of $\mathcal{N}$ even and $\mathcal{N}$ odd for the integer part $\mathbb{E}(\mathcal{N}/2)$.
\end{proof}

Now, the approximation~\eref{N_chi_ind_int_expr} of $N_{\chi}^{(\mathrm{ind})}$ can be rewritten without the integer part $\mathbb{E}(\mathcal{N}/2)$ using the following:

\begin{lemma}
The number $N_{\chi}^{(\mathrm{ind})}$ given by~\eref{N_chi_ind_expr} can be approximated by the integral
\begin{eqnarray}
N_{\chi}^{(\mathrm{ind})}(\mathcal{N}) \approx \frac{1}{2} \, \frac{1}{\left[ (\chi-1)! \right]^2} \int_{1}^{\mathcal{N}} f_{\chi}(x) \, dx
\label{N_chi_ind_int_final_expr}
\end{eqnarray}
for an arbitrary value of $\chi$ and $\mathcal{N} \gg 1$.
\label{lemma_asympt2}
\end{lemma}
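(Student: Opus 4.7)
The plan is to start from the approximation~\eref{N_chi_ind_int_expr} already established in lemma~\ref{lemma_asympt1}, and massage the integral on the right-hand side by a change of variables so that its domain becomes (up to boundary terms that will be shown to be negligible) the full interval $[1,\mathcal{N}]$. Concretely, I would substitute $y=2x$, which gives
\begin{eqnarray*}
\int_{1}^{\mathbb{E}(\mathcal{N}/2)} f_{\chi}(2x) \, dx = \frac{1}{2} \int_{2}^{2\mathbb{E}(\mathcal{N}/2)} f_{\chi}(y) \, dy \, .
\end{eqnarray*}
Comparing with the target integral $\int_{1}^{\mathcal{N}} f_{\chi}(y) \, dy$ in~\eref{N_chi_ind_int_final_expr}, the discrepancy consists of the two boundary integrals $\int_{1}^{2} f_{\chi}(y) \, dy$ and $\int_{2\mathbb{E}(\mathcal{N}/2)}^{\mathcal{N}} f_{\chi}(y) \, dy$. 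I would then distinguish the cases of $\mathcal{N}$ even (for which $2\mathbb{E}(\mathcal{N}/2) = \mathcal{N}$, so the second integral vanishes identically) and $\mathcal{N}$ odd (for which $2\mathbb{E}(\mathcal{N}/2) = \mathcal{N}-1$, so the second integral is taken over an interval of length one near $y=\mathcal{N}$).

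The heart of the argument is then showing that both boundary integrals are negligible compared to $\int_{1}^{\mathcal{N}} f_{\chi}(y) \, dy$ when $\mathcal{N} \gg 1$. For this I would recycle the estimates already used in the proof of lemma~\ref{lemma_asympt1}: the bound~\eref{int_1_to_N_f_chi_ineq} shows that $\int_{1}^{\mathcal{N}} f_{\chi}(x) \, dx$ grows at least as $\mathcal{N}^{\chi}$ for $\mathcal{N} \gg 1$. On the other hand, from the explicit form~\eref{f_chi_def}, $f_{\chi}$ is monotonically increasing on $[1,\mathcal{N}/2]$ and monotonically decreasing on $[\mathcal{N}/2,\mathcal{N}]$, so each boundary integral of unit length is bounded above by $f_\chi$ evaluated at a point close to $1$ or $\mathcal{N}$, both of which behave as $\mathcal{N}^{\chi-1}$ according to~\eref{f_chi_1+N_expr}. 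Hence the boundary corrections are $O(\mathcal{N}^{\chi-1})$ while the main integral grows at least as $\mathcal{N}^{\chi}$, and the ratio tends to zero as $\mathcal{N}\to\infty$.

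Combining these two ingredients, I would conclude
\begin{eqnarray*}
\int_{1}^{\mathbb{E}(\mathcal{N}/2)} f_{\chi}(2x) \, dx \approx \frac{1}{2} \int_{1}^{\mathcal{N}} f_{\chi}(y) \, dy
\end{eqnarray*}
for $\mathcal{N} \gg 1$, which upon insertion into~\eref{N_chi_ind_int_expr} yields precisely~\eref{N_chi_ind_int_final_expr}. The only step I expect to require some care is making the two case distinctions (parity of $\mathcal{N}$) perfectly clean so that the boundary estimate is genuinely uniform in the parity; this is essentially bookkeeping rather than a conceptual obstacle, since the length of the missing interval is at most one in either case and the polynomial bounds on $f_\chi$ near the endpoints are parity-insensitive. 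Theorem~\ref{theorem2} then follows immediately by taking the ratio of~\eref{N_chi_ind_int_final_expr} with~\eref{N_chi_int_expr}.
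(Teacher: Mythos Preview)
Your proposal is correct and follows essentially the same route as the paper: both start from~\eref{N_chi_ind_int_expr}, perform the substitution $y=2x$, split into the cases $\mathcal{N}$ even and $\mathcal{N}$ odd, and show that the boundary integrals over $[1,2]$ and (for odd $\mathcal{N}$) $[\mathcal{N}-1,\mathcal{N}]$ are $O(\mathcal{N}^{\chi-1})$ and hence negligible against the main integral, which is at least of order $\mathcal{N}^{\chi}$ by~\eref{int_1_to_N_f_chi_ineq}. The only cosmetic difference is that the paper bounds $f_{\chi}$ on the boundary intervals by crude explicit upper bounds on the factors in~\eref{f_chi_def}, whereas you invoke the monotonicity of $f_{\chi}$ (visible from~\eref{f_chi_sym_form}) to reduce to endpoint values; both yield the same $\mathcal{N}^{\chi-1}$ estimate.
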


\begin{proof}
The integer part $\mathbb{E}(\mathcal{N}/2)$ in the integral of~\eref{N_chi_ind_int_expr} requires us to distinguish the two cases of $\mathcal{N}$ even and $\mathcal{N}$ odd. Supposing $\mathcal{N}$ even, we have $\mathbb{E}(\mathcal{N}/2)=\mathcal{N}/2$, so that the integral of~\eref{N_chi_ind_int_expr} reads
\begin{eqnarray}
\int_{1}^{\mathbb{E} \left( \frac{\mathcal{N}}{2} \right)} f_{\chi}(2x) \, dx = \frac{1}{2} \left[ \int_{1}^{\mathcal{N}} f_{\chi}(x) \, dx - \int_{1}^{2} f_{\chi}(x) \, dx \right]
\label{int_rel1}
\end{eqnarray}
after the change of variable $x \to 2x$. The second integral in the right-hand side of~\eref{int_rel1} can be neglected compared to the first one. Indeed, we note that the function~\eref{f_chi_def} satisfies $f_{\chi}(x) \leqslant (\chi+1)^{\chi-1}(\mathcal{N}+\chi-2)^{\chi-1}$ for $1 \leqslant x \leqslant 2$, from which we immediately get the inequality
\begin{eqnarray}
\int_{1}^{2} f_{\chi}(x) \, dx \leqslant (\chi+1)^{\chi-1}(\mathcal{N}+\chi-2)^{\chi-1} \sim \mathcal{N}^{\chi-1}
\label{int_1_to_2_f_chi_ineq}
\end{eqnarray}
for $\mathcal{N} \gg 1$. This upper bound shows that the integral $\int_{1}^{2} f_{\chi}(x) \, dx$ behaves \textit{at most} as $\mathcal{N}^{\chi-1}$. Since we already know from~\eref{int_1_to_N_f_chi_ineq} that $\int_{1}^{\mathcal{N}} f_{\chi}(x) \, dx$ behaves \textit{at least} as $\mathcal{N}^{\chi}$, we readily obtain
\begin{eqnarray}
\int_{1}^{2} f_{\chi}(x) \, dx \ll \int_{1}^{\mathcal{N}} f_{\chi}(x) \, dx
\label{int_1_to_2_neglig}
\end{eqnarray}
for $\mathcal{N} \gg 1$. Finally, combining~\eref{int_rel1} and~\eref{int_1_to_2_neglig} with the approximation~\eref{N_chi_ind_int_expr} of $N_{\chi}^{(\mathrm{ind})}$ yields the desired result~\eref{N_chi_ind_int_final_expr}. Lemma~\ref{lemma_asympt2} is thus proved for $\mathcal{N}$ even.

The proof proceeds in the same way for the case of an odd $\mathcal{N}$. The only difference is that the previous expression~\eref{int_rel1} must here be replaced by
\begin{eqnarray}
\fl \int_{1}^{\mathbb{E} \left( \frac{\mathcal{N}}{2} \right)} f_{\chi}(2x) \, dx = \frac{1}{2} \left[ \int_{1}^{\mathcal{N}} f_{\chi}(x) \, dx - \int_{1}^{2} f_{\chi}(x) \, dx - \int_{\mathcal{N}-1}^{\mathcal{N}} f_{\chi}(x) \, dx \right] .
\label{int_rel2}
\end{eqnarray}
Since the function~\eref{f_chi_def} satisfies $f_{\chi}(x) \leqslant \chi^{\chi-1}(\mathcal{N}+\chi-1)^{\chi-1}$ for $\mathcal{N}-1 \leqslant x \leqslant \mathcal{N}$, we also have that
\begin{eqnarray}
\int_{\mathcal{N}-1}^{\mathcal{N}} f_{\chi}(x) \, dx \ll \int_{1}^{\mathcal{N}} f_{\chi}(x) \, dx 
\label{int_N_min_1_to_N_neglig}
\end{eqnarray}
for $\mathcal{N} \gg 1$. Combining~\eref{int_1_to_2_neglig}-\eref{int_N_min_1_to_N_neglig} with the approximation~\eref{N_chi_ind_int_expr} of $N_{\chi}^{(\mathrm{ind})}$ hence readily yields the desired result~\eref{N_chi_ind_int_final_expr}, which also proves lemma~\ref{lemma_asympt2} for $\mathcal{N}$ odd.
\end{proof}

With the results of lemmas~\ref{lemma_asympt1} and~\ref{lemma_asympt2} at hand, it is now straightforward to derive the asymptotic behavior of the ratio $N_{\chi}^{(\mathrm{ind})}/N_{\chi}$. Indeed, the approximations~\eref{N_chi_int_expr} and~\eref{N_chi_ind_int_final_expr} for $N_{\chi}$ and $N_{\chi}^{(\mathrm{ind})}$, respectively, lead precisely to the result~\eref{ratio_asympt}. This concludes the proof of theorem~\ref{theorem2}.

Therefore, the main outcome of theorem~\ref{theorem2} is to show that the total number of independent quantities $Q_{\alpha_1 \cdots \alpha_m \, , \, \beta_1 \cdots \beta_n}$ that possess $\mathcal{N}=m+n$ subscripts is, for large values of $\mathcal{N}$, effectively divided by two as a consequence of the FR~\eref{fluct_rel}. This result is valid for an arbitrary total number $\chi \geqslant 1$ of currents. As seen in figure~\ref{fig1}, this result already gives an accurate estimate for values of $\mathcal{N}$ of the order of magnitude of ten to a hundred. The larger the value of $\chi$, the smaller the value of $\mathcal{N}$ for the approximation~\eref{ratio_asympt} to be accurate.


\section{Conclusion}\label{conclusion_sec}

In this paper we studied the influence of the multivariate fluctuation relation (FR) expressed by~\eref{fluct_rel}, and thus of microreversibility, on the statistics of the currents flowing across general open systems in nonequilibrium steady states. We quantitatively showed that the FR greatly reduces the total number of independent quantities that need to be specified in order to fully determine this statistics.

The latter is described by the generating function $Q \left( \bm{\lambda} , \bm{A} \right)$ of the statistical cumulants of the currents. The $m^{\mathrm{th}}$ cumulant $Q_{\alpha_1 \cdots \alpha_m} \left( \bm{A} \right)$ is defined by the derivative of the generating function with respect to the counting parameters $\lambda_{\alpha_1}$, \ldots , $\lambda_{\alpha_m}$. The $n^{\mathrm{th}}$ response $Q_{\alpha_1 \cdots \alpha_m \, , \, \beta_1 \cdots \beta_n}$ of the $m^{\mathrm{th}}$ cumulant is then defined by the derivative of $Q_{\alpha_1 \cdots \alpha_m} \left( \bm{A} \right)$ with respect to the affinities $A_{\beta_1}$, \ldots , $A_{\beta_n}$. The statistics of the currents can thus be alternatively described by the set $\left\{ Q_{\alpha_1 \cdots \alpha_m \, , \, \beta_1 \cdots \beta_n} \right\}$.

We saw that expanding both sides of the FR~\eref{fluct_rel} as power series of both $\bm{\lambda}$ and $\bm{A}$ generates relations of the form~\eref{gen_rel_response_coef} between the quantities $Q_{\alpha_1 \cdots \alpha_m \, , \, \beta_1 \cdots \beta_n}$. We were then able to identify the independent elements of the complete set $\left\{ Q_{\alpha_1 \cdots \alpha_m \, , \, \beta_1 \cdots \beta_n} \right\}$ by investigating all relations obtained from~\eref{gen_rel_response_coef} for a fixed total number $\mathcal{N} = m+n$ of subscripts and all compatible values of the indices $m$ and $n$.

We thus showed that, for a fixed $\mathcal{N} = m+n$, the relation~\eref{gen_rel_response_coef} that corresponds to an arbitrary even index $m=m_K \equiv 2K$ can be deduced from the set of all relations obtained from~\eref{gen_rel_response_coef} for all odd indices $m>m_K$. Therefore, the main outcome of our analysis is that the only independent relations generated by the FR are the relations~\eref{gen_rel_response_coef} that correspond to odd indices $m$. We then used this result to count the number of independent quantities $Q_{\alpha_1 \cdots \alpha_m \, , \, \beta_1 \cdots \beta_n}$ that fully characterize the current statistics. In the limit case of a large total number $\mathcal{N}$ of subscripts, $\mathcal{N} \gg 1$, we saw that the effect of the FR is to effectively divide by two the total number of these independent quantities.

We believe our work shed new light on the general properties of the statistics of currents in nonequilibrium physical systems that obey a FR. We identified the quantities that are left unspecified by the general mathematical structure that governs the system, namely the FR. These quantities remain to be determined in any particular physical problem so as to fully characterize the statistics of the currents. Our results may thus e.g. be of interest for the experimental study of nonequilibrium systems that obey a FR. Indeed, they clearly indicate which quantities must be measured experimentally from the full current statistics. As another example, our study may also prove useful in the field of electron full counting statistics (see e.g. \cite{Naz}).

We recall that the present paper deals with nonequilibrium systems that operate in the absence of magnetic field. A potentially interesting direction of further research could for instance be to investigate how the above results generalize to systems with a nonzero magnetic field. Such an extension would be particularly relevant regarding e.g. quantum transport in nanostructures or mesoscopic conductors.


\section*{Acknowledgments}

This research is financially supported by the Universit\'e libre de Bruxelles (ULB) and the Fonds de la Recherche Scientifique~-~FNRS under the Grant PDR~T.0094.16 for the project ``SYMSTATPHYS".


\appendix


\section{Proof of the relation \texorpdfstring{\eref{gen_rel_response_coef}}{}}\label{proof_rel_app}

As was outlined in the previous section, the first step is to expand both sides of the FR~\eref{fluct_rel} in the alternative form~\eref{fluct_rel_2} as power series of both the counting parameters $\bm{\lambda}$ and the affinities $\bm{A}$. We then merely identify  the coefficients of a same power of both the counting parameters and the affinities, and readily obtain the general relation~\eref{gen_rel_response_coef}.

In view of~\eref{Q_exp_count_par_and_aff}, the left-hand side of~(\ref{fluct_rel_2}) is expanded as
\begin{eqnarray}
\fl Q \left( -\bm{\lambda} , \bm{A} \right) = \sum_{m = 0}^{\infty}\sum_{n = 0}^{\infty} \frac{(-1)^m}{m! n!} \, Q_{\alpha_1 \cdots \alpha_m \, , \, \beta_1 \cdots \beta_n} \lambda_{\alpha_1} \cdots \lambda_{\alpha_m} A_{\beta_1} \cdots A_{\beta_n} \, .
\label{Q_minLbda_exp_count_par_and_aff}
\end{eqnarray}
We now must write the right-hand side of~\eref{fluct_rel_2} as a power series of $\bm{\lambda}$ and $\bm{A}$.
Expanding the translation operator~(\ref{transl}) acting on the cumulant generating function yields
\begin{eqnarray}
\hat T \left( \bm{A} \right) Q \left( \bm{\lambda} , \bm{A} \right) = \sum_{k = 0}^{\infty} \frac{1}{k!} \left( \bm{A} \cdot \frac{\partial}{\partial \bm{\lambda}} \right)^k Q \left( \bm{\lambda} , \bm{A} \right) .
\label{action_T_on_Q_series_for_T}
\end{eqnarray}

According to Newton's chain rule for the derivatives of products, we have that
\begin{eqnarray}
\left( \bm{A} \cdot \frac{\partial}{\partial \bm{\lambda}} \right) \, \lambda_{\alpha_1}\cdots \lambda_{\alpha_m} = \sum_{j=1}^m \lambda_{\alpha_1}\cdots \lambda_{\alpha_{j-1}} A_{\alpha_j} \lambda_{\alpha_{j+1}}\cdots \lambda_{\alpha_m} 
\end{eqnarray}
so that the degree in the counting parameters $\bm{\lambda}$ is decreased by one, while the degree in the affinities $\bm{A}$ is increased by one.  If we relabel the index of $A_{\alpha_j}$ into $A_{\beta_j}$ in order to recover the chosen notation, we see that the index ${\alpha_j}$ has been replaced by ${\beta_j}$.  Therefore, the translation operator has the action of moving the indices $\beta_j$ from the right to the left of the comma in the adopted notations.  Consequently, the action of the operator $\left( \bm{A} \cdot \partial / \partial \bm{\lambda} \right)^k$ on the power series~\eref{Q_exp_count_par_and_aff} of the cumulant GF $Q$ is given by
\begin{eqnarray}
\fl \left( \bm{A} \cdot \frac{\partial}{\partial \bm{\lambda}} \right)^k Q \left( \bm{\lambda} , \bm{A} \right) = \sum_{m=0}^{\infty}\sum_{n=k}^{\infty} \frac{1}{m! n!} \, Q_{\alpha_1 \cdots \alpha_m \, , \, \beta_1 \cdots \beta_n}^{\{k\}} \lambda_{\alpha_1} \cdots \lambda_{\alpha_m} A_{\beta_1} \cdots A_{\beta_n} 
\label{A_partLbda_k_on_Q}
\end{eqnarray}
for any $k \geqslant 0$, and where the quantities $Q^{\{k\}}$ are defined by
\begin{eqnarray}
\fl Q_{\alpha_1 \cdots \alpha_m \, , \, \beta_1 \cdots \beta_n}^{\{k\}} \equiv \sum_{j = 1}^{n} Q_{\alpha_1 \cdots \alpha_m \beta_j \, , \, \beta_1 \cdots \beta_{j-1} \beta_{j+1} \cdots \beta_n}^{\{k-1\}} \nonumber\\[0.2cm]
= \sum_{j_1 = 1}^{n} \sum_{j_{2}=1 \atop j_{2} \neq j_{1}}^{n} \cdots \sum_{j_{k}=1 \atop j_{k} \neq j_{k-1}}^{n} Q_{\alpha_1 \cdots \alpha_m \beta_{j_1} \cdots \beta_{j_k} \, , \, (\boldsymbol{\cdot})} 
\label{Q_k_def}
\end{eqnarray}
for $k \geqslant 1$, with $Q^{(0)} \equiv Q$, and where we used the notation $(\boldsymbol{\cdot})$ to denote the set of all subscripts $\beta$ that are different of the subscripts $\beta$ present on the left of the comma, i.e. $\beta_{j_1} , \ldots , \beta_{j_k}$ here.   Therefore, substituting the result~\eref{A_partLbda_k_on_Q} into~\eref{action_T_on_Q_series_for_T}, exchanging the sums over $k$ and $n$ as
\begin{eqnarray}
\sum_{k = 0}^{\infty} \sum_{n=k}^{\infty} \, (\cdot) =\sum_{n = 0}^{\infty}  \sum_{k=0}^{\infty}  \, (\cdot) 
\end{eqnarray} 
and replacing $k$ by $j$, we find 
\begin{eqnarray}
\fl\hat T \left( \bm{A} \right) Q \left( \bm{\lambda} , \bm{A} \right) = \sum_{m = 0}^{\infty} \frac{1}{m!} \sum_{n = 0}^{\infty} \frac{1}{n!} \sum_{j=0}^{\infty} \frac{1}{j!} \, Q_{\alpha_1 \cdots \alpha_m \, , \, \beta_1 \cdots \beta_n}^{\{j\}} \lambda_{\alpha_1} \cdots \lambda_{\alpha_m} A_{\beta_1} \cdots A_{\beta_n} \, .
\label{T_on_Q_power_series_count_aff}
\end{eqnarray}

Now, we substitute the two power series~\eref{Q_minLbda_exp_count_par_and_aff} and~\eref{T_on_Q_power_series_count_aff} into the FR~\eref{fluct_rel_2} to get
\begin{eqnarray}
\fl \sum_{m = 0}^{\infty} \frac{(-1)^m}{m!} \sum_{n = 0}^{\infty} \frac{1}{n!} \, Q_{\alpha_1 \cdots \alpha_m \, , \, \beta_1 \cdots \beta_n} \lambda_{\alpha_1} \cdots \lambda_{\alpha_m} A_{\beta_1} \cdots A_{\beta_n} \nonumber\\[0.1cm]
= \sum_{m = 0}^{\infty} \frac{1}{m!} \sum_{n = 0}^{\infty} \frac{1}{n!} \sum_{j=0}^{\infty} \frac{1}{j!} \, Q_{\alpha_1 \cdots \alpha_m \, , \, \beta_1 \cdots \beta_n}^{\{j\}} \lambda_{\alpha_1} \cdots \lambda_{\alpha_m} A_{\beta_1} \cdots A_{\beta_n} \, ,
\label{fluct_rel_power_series}
\end{eqnarray}
which must thus be valid for any counting parameters $\bm{\lambda}$ and affinities $\bm{A}$. The sums over the index $m$ being exactly the same on both sides of~\eref{fluct_rel_power_series}, we can readily identify the coefficient of the $m^{\mathrm{th}}$ power of the counting parameters and those of the $n^{\mathrm{th}}$ power of the affinities. After multiplying both sides by $(-1)^m$, we get
\begin{eqnarray}
Q_{\alpha_1 \cdots \alpha_m \, , \, \beta_1 \cdots \beta_n} = (-1)^m \sum_{j=0}^{n} \frac{1}{j!} \, Q_{\alpha_1 \cdots \alpha_m \, , \, \beta_1 \cdots \beta_n}^{\{j\}} \, ,
\label{coef_m_power_count_n_power_aff}
\end{eqnarray}
 which is valid for any $m,n \geqslant 0$.

Now, remember that the quantity $Q^{\{j\}}$ is defined by~\eref{Q_k_def}, i.e. the sum
\begin{eqnarray}
Q_{\alpha_1 \cdots \alpha_m \, , \, \beta_1 \cdots \beta_n}^{\{j\}} = \sum_{k_1 = 1}^{n} \sum_{k_{2}=1 \atop k_{2} \neq k_{1}}^{n} \cdots \sum_{k_{j}=1 \atop k_{j} \neq k_{j-1}}^{n} Q_{\alpha_1 \cdots \alpha_m \beta_{k_1} \cdots \beta_{k_j} \, , \, (\boldsymbol{\cdot})}
\label{Q_j_def}
\end{eqnarray}
for any $1 \leqslant j \leqslant n$, and $Q^{\{0\}}$ being merely $Q$ itself. Remembering the invariance~\eref{Q_inv_perm_indices}, $Q^{\{j\}}$ can then be rewritten in a more convenient form by identifying the total number of terms $Q_{\alpha_1 \cdots \alpha_m \beta_{l_1} \cdots \beta_{l_j} \, , \, (\boldsymbol{\cdot})}$, with $l_1 < \cdots < l_j$, that occur in the right-hand side of~\eref{Q_j_def}. Note that the latter involves sums over indices $k_1, \ldots , k_j$ that must all take different values. This ensures that counting all the $Q_{\alpha_1 \cdots \alpha_m \beta_{l_1} \cdots \beta_{l_j} \, , \, (\boldsymbol{\cdot})}$ such that $l_1 < \cdots < l_j$ indeed includes all the terms present in the right-hand side of~\eref{Q_j_def}. Furthermore, the total number of such terms $Q_{\alpha_1 \cdots \alpha_m \beta_{l_1} \cdots \beta_{l_j} \, , \, (\boldsymbol{\cdot})}$ can be readily seen to be nothing but the total number of permutations of the $j$ distinct values $l_1 , \ldots , l_j$, that is merely $j!$. Therefore, the quantity $Q^{\{j\}}$ can be alternatively written as
\begin{eqnarray}
Q_{\alpha_1 \cdots \alpha_m \, , \, \beta_1 \cdots \beta_n}^{\{j\}} = j! \, Q_{\alpha_1 \cdots \alpha_m \, , \, \beta_1 \cdots \beta_n}^{(j)}
\label{Q_j_alt_expr}
\end{eqnarray}
in terms of~\eref{Q_j_expr}. Substituting this expression into~\eref{coef_m_power_count_n_power_aff} then yields~\eref{gen_rel_response_coef}. Q.E.D.

\section*{References}

\bibliographystyle{unsrt}
\bibliography{DATABASE_Nonlinear_response}


\end{document}